\documentclass[a4paper,UKenglish,cleveref, autoref, thm-restate]{lipics-v2021}

\hideLIPIcs  

\usepackage{comment}
\usepackage{todonotes}
\usepackage{tikz}
\usepackage{hyperref}
\usetikzlibrary{arrows.meta,positioning,calc,fit, backgrounds}

\title{Lettericity Is \textsf{NP}-Complete} 
 
\author{Henning Fernau}{Fachbereich~4 -- Abteilung Informatikwissenschaften, Universit\"at Trier, 54286 Trier, Germany}{}{https://orcid.org/0000-0002-4444-3220}{}

\author{Samuel German\footnote{Corresponding author.}}{Department of Computer Science and Engineering, University of California San Diego, 9500 Gilman Drive, Mail Code 0404, La Jolla, CA 92093-0404, USA \and \url{https://samuelgerman6.github.io/} }{sgerman@ucsd.edu}{https://orcid.org/0009-0002-5992-8391}{}

\author{Kevin Mann}{Fachbereich~4 -- Abteilung Informatikwissenschaften, Universit\"at Trier, 54286 Trier, Germany}{}{https://orcid.org/0000-0002-0880-2513}{}

\authorrunning{H. Fernau, S. German, K. Mann} 

\Copyright{Henning Fernau, Samuel German, and Kevin Mann} 

\ccsdesc[100]{Theory of computation~Problems, reductions and completeness} 
\ccsdesc{Mathematics of computing~Graph theory}

\keywords{Lettericity, NP-completeness, graph coloring, Exponential Time Hypothesis} 

\category{} 

\relatedversion{} 

\acknowledgements{This work was initiated, and the substantial majority of it developed, while S.G. was visiting H.F. and K.M. at the University of Trier; S.G. thanks the University of Trier and its Theoretical Computer Science research group for their hospitality.}

\nolinenumbers 

\EventEditors{John Q. Open and Joan R. Access}
\EventNoEds{2}
\EventLongTitle{42nd Conference on Very Important Topics (CVIT 2016)}
\EventShortTitle{CVIT 2016}
\EventAcronym{CVIT}
\EventYear{2016}
\EventDate{December 24--27, 2016}
\EventLocation{Little Whinging, United Kingdom}
\EventLogo{}
\SeriesVolume{42}
\ArticleNo{23}

\begin{document}

\maketitle

\begin{abstract}
The lettericity of a graph $G$ is the smallest size of a set $\Sigma$ such that there exist $w_1, \ldots, w_{|V(G)|} \in \Sigma$ and a decoder $D \subseteq \Sigma^2$ for which $G$ is isomorphic to the letter graph $(\{1, \ldots, |V(G)|\}, \{ij : 1 \le i < j \le |V(G)|, w_iw_j \in D\})$. It took around two decades of the study of lettericity for, in the simpler case of paths, a closed-form expression for its lettericity to be derived; this suggests that the question of whether the lettericity of an arbitrary graph can be computed in polynomial time is nontrivial. Indeed, this question has been raised repeatedly as an open problem in recent literature. 

We solve this problem by showing that the lettericity problem on arbitrary graphs is \textsf{NP}-complete (Theorem~\ref{thm:np-complete}). Along the way towards showing this, we also prove that the coloring extension problem --- the same problem as lettericity, with the added condition that if $f$ is the isomorphism mapping from $G$ to the letter graph, $w_{f(v)} = \chi(v)$ for a given coloring $\chi$ of $G$   --- is \textsf{NP}-complete even when $\chi$ is proper and maps exactly two vertices to each color, which resolves another open question posed in recent literature (Theorem~\ref{thm:coloring-extension-npcomplete}). We also resolve the open problem of classifying the complexity of the word extension problem, which is the same problem as lettericity except that the $w_i$ are fixed; we show it to be \textsf{NP}-complete even when $G$ is a disjoint union of equally sized cliques (Theorem~\ref{thm:word-extension-np-complete}), which, in tandem with our \textsf{NP}-completeness result for coloring extension, contrasts with the known result that when the constraint of the coloring extension problem and the constraint of the word extension problem are both applied to lettericity, lettericity can be decided in polynomial time.

Additionally, we use the reduction in the \textsf{NP}-completeness proof to give a lower bound on algorithms for lettericity under the Exponential Time Hypothesis (ETH): unless the ETH is false, there cannot exist a deterministic algorithm to decide whether the lettericity of an $n$-vertex graph is at most~$k$ in time $2^{o(n)}$, even when $n = 6k$ (Theorem~\ref{thm:eth-lowerbound}).
\end{abstract}
\clearpage
\section{Introduction}

Lettericity was introduced as a graph parameter in \cite{Pet2002}. For a graph  $G$, its lettericity is the smallest size of a set $\Sigma$ for which  there exist $w_1, \ldots, w_{|V(G)|} \in \Sigma$ and a decoder $D \subseteq \Sigma^2$ such that $G$ is isomorphic to what is known as the letter graph, namely $\Gamma_{\mathcal{D}}(w)=(\{1, \ldots, |V(G)|\}, \{ij : 1 \le i < j \le |V(G)|, w_iw_j \in D\})$. This notion has received quite some attention in recent years, see (e.g.) \cite{AleLozMal2024,AleVilLoz2026,AleKLZ2026,AtmLoz2024,Bra2025,FenFMRS2025,FenFFKS2026,GroMorSac2026,Loz2023,ManVat2024,Fer2020}.   The question of whether it would be possible to determine the lettericity of a given graph in polynomial time was repeatedly raised as an open question in recent literature, see \cite{AleLWZ2020,AleKLZ2026,FenFMRS2025,GroMorSac2026}; it was suggested as a research question already in the very first paper on this graph parameter~\cite{Pet2002}.
Only after around two decades of attention to lettericity was a closed-form expression for the lettericity of a path derived~\cite{Fer2020}, suggesting a nontrivial nature to the problem of understanding the complexity of computing the lettericity of an arbitrarily given graph. Another interesting aspect here is that related parameters can be either computable in polynomial time or \textsf{NP}-complete.
\begin{itemize}
    \item By \cite{AleLWZ2020}, the lettericity of a graph is upper-bounded by its neighborhood diversity, a parameter that can be determined in polynomial time. In fact, neighborhood diversity can be even characterized by a variant called \emph{symmetric lettericity} in \cite{GroMorSac2026}.
    \item By \cite{AleLWZ2020}, the lettericity of a graph (plus one) upper-bounds its linear clique-width, a parameter whose computation yields an \textsf{NP}-hard problem.
    \item By \cite{FenFFKS2026}, lettericity of a graph is lower-bounded by its thinness, a parameter only recently proven to pose an \textsf{NP}-hard problem~\cite{Shi2025}.
\end{itemize}

The main result of this paper is to establish that the lettericity problem is in fact \textsf{NP}-complete on general graphs (Theorem~\ref{thm:np-complete}), resolving the open problem. In the process of showing this result, we also resolve the complexity of the coloring extension problem, which is the same problem as lettericity with the added condition that for the isomorphism mapping~$f$ from $G$ to the letter graph, $w_{f(v)} = \chi(v)$ for a given coloring $\chi$ of $G$. The paper~\cite{GroMorSac2026} recently posed as an open question the complexity of coloring extension. We show that even in the restricted case where $\chi$ is a proper coloring which maps exactly 2 vertices to each color, the coloring extension problem is \textsf{NP}-complete. 

We also give a short proof that the word extension problem is \textsf{NP}-complete (Theorem~\ref{thm:word-extension-np-complete}), which resolves another open problem which was asked recently~\cite{GroMorSac2026}. It is the same problem as lettericity except that each of the $w_i$ must be equal to a specific value. Note that neither the \textsf{NP}-hardness of the word extension problem nor that of the coloring extension problem follows immediately from the \textsf{NP}-hardness of lettericity, as $\exists y \ R(x,y) $ being \textsf{NP}-hard to decide for a given~$x$ does not imply that $R(x,y)$ (for given $x$ and~$y$) is \textsf{NP}-hard to decide. Given that lettericity, when both the added  constraint of the coloring extension problem and the added constraint of the word extension problem are enforced upon it, can be decided in polynomial time~\cite[Theorem~2]{GroMorSac2026}, our proofs that the coloring extension and word extension problems are \textsf{NP}-complete help clarify where the hardness of lettericity lies. 

Although Theorem~\ref{thm:np-complete} rules out there being a polynomial-time algorithm for lettericity under the assumption $\mathsf{P} \neq \mathsf{NP}$, it does not by itself rule out a deterministic subexponential-time algorithm for lettericity under $\mathsf{P} \neq \mathsf{NP}$.  We use the reduction employed to demonstrate the \textsf{NP}-hardness of lettericity to prove that, unless the Exponential Time Hypothesis is false, there cannot exist a deterministic algorithm to decide whether the lettericity of a graph on $n$ vertices is at most $k$ in  time $2^{o(n)}$, even when $n = 6k$ (Theorem~\ref{thm:eth-lowerbound}).

\section{Preliminaries}
\subsection{Notation for the Lettericity Problem}

This subsection gives an equivalent formulation of the lettericity problem which uses a function to assign letters to vertices so that the problem is expressed more similarly to the coloring problem in which there is a function assigning colors to vertices.  
A \emph{letter representation} of $G$ is a triple $(\prec, \lambda, D)$ where $\prec$ is a total order of $V$, $\lambda : V \to \Sigma$ assigns a letter to every vertex, and $D \subseteq \Sigma^2$ is a decoder, such that for all $u \prec v$, \(uv \in E \iff (\lambda(u), \lambda(v)) \in D.\) $\Sigma$ is called the alphabet of the letter representation, and its elements are accordingly called letters. The \emph{lettericity} of~$G$, denoted $\ell(G)$, is the minimum value of $|\Sigma|$ in any letter representation of~$G$. This notion was introduced in \cite{Pet2002}.
\textsc{Lettericity} is the problem of deciding whether, for a  given graph and integer~$k$, $\ell(G) \le k$. 

\subsection{Coloring Extension}\label{subsec:color}
The problem formulated in this subsection is equivalent to the coloring extension problem described in 
\cite{GroMorSac2026}, but uses our lettericity notation. Let $\chi: V(G) \to [k]$ be a graph coloring (it need not be a proper graph coloring in the sense of assigning different colors to the endpoints of an edge --- here a coloring is just a vertex-to-letter map). We use the term \emph{color class} to refer to the set of vertices mapped to a certain color by $\chi$. We can always assume that $\chi$ and $\lambda$ are surjective, as unused colors can be discarded and the remaining colors renumbered; likewise, unused letters can be deleted from the alphabet and the decoder. 

As in \cite{GroMorSac2026}, 
we define what it means for a letter representation to respect~$\chi: V(G) \to [k]$: $(\prec, \lambda : V(G) \to \Sigma, D)$ respects~$\chi$ if there exist pairwise distinct letters $\beta_1, \ldots, \beta_k \in \Sigma$ such that $\lambda(v) = \beta_{\chi(v)}$ for all $v\in V(G)$; i.e., after changing $\chi$ so that color $i$ has been renamed to $\beta_i$, we have $\lambda = \chi$. If there exists a letter representation of~$G$ respecting~$\chi$, then there is also a letter representation of~$G$ whose letter function $\lambda$ is~$\chi$: we can relabel the decoder entries according to $\beta_i \mapsto i$ when using~$\chi$ as the letter function. 

\textsc{coloring extension} is the problem of deciding, given~$G$ and a coloring $\chi$ thereof, whether~$G$ has a letter representation which respects~$\chi$. 

\subsection{Distinguishers and Intervals}

For a graph $G$ and two distinct vertices of it $u,v$, we define the set of \emph{distinguishers} \(\Delta_G(u,v) = (N_G(u) \triangle N_G(v)) \setminus \{u,v\},\) where $N_G(v)$ is the open neighborhood of $v$ (the set of vertices adjacent to $v$) and $X \triangle Y = (X \setminus Y) \cup (Y \setminus X)$ is the symmetric difference of $X$ and $Y$. Thus $\Delta_G(u,v)$ is the set of all vertices in $V(G) \setminus \{u,v\}$ which are adjacent to exactly one of $u$ and $v$.

For a total order $\prec$, and distinct vertices $u,v$, the open interval between them is \(I_{\prec}(u,v) = \{x: \min_{\prec} \{u,v\} \prec x \prec \max_\prec\{u,v\} \}.\)

\subsection{Operations with Graphs}\label{sec:union}
Let $G_0 = (V_0, E_0)$ and $G_1 = (V_1, E_1)$ be two graphs. Their \emph{union} is $(G_0 \cup G_1) = (V_0 \cup V_1, E_0 \cup E_1)$.
When the vertex sets of $G_0$ and $G_1$ are pairwise disjoint, we write $G_0 \sqcup G_1$ to denote $G_0 \cup G_1$. If $G$ is a graph or digraph and $S \subseteq V(G)$, then $G[S]$ is the subgraph \emph{induced} by $S$, i.e., the graph $G[S]=(S,\{uv \in E(G), \ u,v \in  S\})$; moreover,  $G-S = G[V(G) \setminus S]$. 

If $A,B \subseteq V(G)$ and $A \cap B = \emptyset$, then $G[A,B]$ is the bipartite graph on vertex set $A \cup B$ whose edge set consists of exactly the edges in $G[A \cup B]$ which are between $A$ and $B$. 
\subsection{Source Problem}
\textsc{Monotone-NAE3-SAT} takes as input a monotone CNF formula in which each clause contains exactly 3 distinct variables and asks whether there exists an assignment to its variables which makes every clause contain at least one false and one true variable. It is known  to be \textsf{NP}-complete; see \cite{DarDoc2020}.
\subsection{Distinguisher Lemma}\label{sec:prelimobs}
\begin{lemma}\label{lem:orderofdistinguishers}
Let $G$ be a graph, $\chi : V(G) \to [k]$ a coloring of $G$, and $\prec$ a total order of $G$. There exists a decoder $D$ for which $(\prec, \chi, D)$ is a letter representation of $G$ if and only if \begin{equation}\label{eq:orderofdistinguishers}\chi(u) = \chi(v) \implies \Delta_G(u,v) \subseteq I_{\prec}(u,v) \qquad (\forall u,v \in V(G) \land u \neq v). \end{equation}
In $(\prec, \chi, D)$, $\chi$ is functioning as $\lambda$ in the formulation  of a letter representation. 
\end{lemma}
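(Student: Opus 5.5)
We prove the two directions separately; both come down to a single observation about how a letter representation decides adjacency with a fixed third vertex.

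\emph{Necessity.} Suppose $(\prec,\chi,D)$ is a letter representation of $G$. Take $u\neq v$ with $\chi(u)=\chi(v)=c$, and a vertex $x\in\Delta_G(u,v)$. Suppose, for contradiction, that $x\notin I_\prec(u,v)$. Since $x\notin\{u,v\}$, either $x$ precedes both $u$ and $v$, or $x$ follows both.
\begin{itemize}
\item If $x\prec u$ and $x\prec v$, then $xu\in E \iff (\chi(x),c)\in D \iff xv\in E$.
\item If $x$ follows both, then symmetrically $xu\in E\iff (c,\chi(x))\in D\iff xv\in E$.
\end{itemize}
In either case $x$ is adjacent to both or to neither of $u,v$, contradicting $x\in\Delta_G(u,v)$. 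Hence $\Delta_G(u,v)\subseteq I_\prec(u,v)$, which is condition~\eqref{eq:orderofdistinguishers}.

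\emph{Sufficiency.} Assume~\eqref{eq:orderofdistinguishers} holds. Define the decoder by
\[ D=\{(\chi(a),\chi(b)) : a\prec b,\ ab\in E(G)\}. \]
By construction, every edge $ab$ with $a\prec b$ has its letter pair in $D$. It remains to show that non-edges are not decoded as edges. Suppose $u\prec v$, $uv\notin E$, and $(\chi(u),\chi(v))\in D$. Then $(\chi(u),\chi(v))$ is witnessed by some edge $ab$ with $a\prec b$, $\chi(a)=\chi(u)$ and $\chi(b)=\chi(v)$. We move from the witness pair $(a,b)$ to the pair $(u,v)$ in two steps, replacing one endpoint at a time, and derive a contradiction.

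\emph{Step 1: from $ab$ to $ub$.} If $a\neq u$, assume $ub\notin E$. Then $b$ distinguishes $a$ and $u$, which share a colour, so by~\eqref{eq:orderofdistinguishers} $b$ lies strictly between $a$ and $u$. Since $a\prec b$, this forces $a\prec b\prec u$. (If $b=u$, then $\chi(u)=\chi(v)$; this coincidence is handled in the remark below.)

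\emph{Step 2: from $ub$ to $uv$.} Similarly, replace $b$ by $v$. Since $u$ distinguishes $b$ and $v$ whenever adjacency changes, $u$ must lie between $b$ and $v$.

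\emph{The main obstacle.} The bookkeeping lies in combining the order constraints from the two steps. The cleanest route is a case split on whether $ub\in E$.
\begin{itemize}
\item If $ub\in E$ (or $a=u$), the second step gives that $u$ is between $b$ and $v$. Since $u\prec v$, this yields $b\prec u$.
\item Otherwise, the first step gave $b\prec u$ directly.
\end{itemize}
In both cases $b\prec u\prec v$. Now apply the symmetric argument with the order of replacement swapped: first change $b$ to $v$ while keeping $a$, then change $a$ to $u$. This symmetric argument yields $u\prec v\prec a$ in the analogous cases. Together with $a\prec b$, this gives $b\prec u\prec v\prec a\prec b$, which is impossible.

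\emph{Remark on coincidences.} Degenerate coincidences, namely $b=u$, $a=v$, or the case where a distinguisher would have to be one of the pair itself, are excluded from $\Delta_G$ by definition. These need a brief separate check, using the same reasoning applied to the pair $(a,v)$ directly.
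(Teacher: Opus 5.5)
Your proof is correct, and you use the same decoder $D=\{(\chi(a),\chi(b)) : a\prec b,\ ab\in E(G)\}$ as the paper, but your sufficiency argument takes a different route.

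\textbf{The paper's route.} For each ordered color pair it fixes an extremal representative $(p,q)$: $p$ is the $\prec$-first vertex of color $\chi(u)$ and $q$ is the $\prec$-last vertex of color $\chi(v)$. Since $p\preceq u\prec v\preceq q$, in each one-endpoint swap $uv\to pv\to pq$ the fixed vertex automatically lies outside the interval of the two same-colored vertices. So adjacency is preserved with no case analysis, and the coincidences $p=u$ and $v=q$ are trivial.

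\textbf{Your route.} You compare an arbitrary witnessing edge $ab$ with an arbitrary non-edge $uv$. Two case splits give $b\prec u$ and $v\prec a$, and you close with the cycle $a\prec b\prec u\prec v\prec a$. This works: both chains go through, including the subcases $a=u$ and $b=v$. The cost is the case analysis and the coincidences $b=u$ and $a=v$, which your remark treats loosely. The pair $(a,v)$ is the right intermediate only when $b=u$; then the second chain runs unchanged and gives $v\prec a$, contradicting $a\prec b=u\prec v$. When $a=v$ you should instead route through $(u,b)$; the first chain alone gives $b\prec u$, contradicting $u\prec v=a\prec b$. The two coincidences cannot occur together, since $a\prec b$ and $u\prec v$.

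\textbf{Necessity.} The paper simply cites Ferguson's Lemma~1. Your direct argument, that a vertex outside $I_\prec(u,v)$ sees $u$ and $v$ through the same decoder entry, makes the lemma self-contained.
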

\begin{proof}
The forward direction of the iff statement in Lemma~\ref{lem:orderofdistinguishers} is exactly~\cite[Lemma~1]{Fer2020}.

For the backward direction, assume \eqref{eq:orderofdistinguishers}. We need to show that there is some decoder $D$ for which $(\prec, \chi, D)$ is a letter representation of $G$. This reduces to checking whether we can define a $D$ such that for any two vertices $u,v \in V(G)$ with $u \prec v$, $uv \in E(G)$ if and only if $(\chi(u), \chi(v)) \in D$. We show that we can define such a $D$ by arguing, for every pair of colors $a,b \in [k]$, that all vertex pairs $u, v \in V(G)$ with $u \prec v$ and $\chi(u) = a$, $\chi(v) = b$ have the same status as to whether they belong to $E(G)$, which implies that a desired $D$ can be defined by including $(a,b)$ in it depending on what the exact adjacency status is for the specific color pair $a,b$. That is, we will be able to set $D:=\{(\chi(u), \chi(v)):u \prec v \ \text{and} \ uv \in E(G) \}$.

Let $a,b \in [k]$ be a color pair.  Let $u$ and $v$ be distinct vertices, and assume that $u \prec v$.  Assume that $\chi(u) = a$ and $\chi(v) = b$. Let $p$ be the first vertex, according to $\prec$, in $\chi^{-1}(a)$, and let $q$ be the last vertex, according to $\prec$, in $\chi^{-1}(b)$. Thus we have $p \preceq u \prec v \preceq q$. Because $p$ and $u$ have the same color and $v$ follows both,
\(
  pv\in E(G)\iff uv\in E(G);
\)
this is immediate if $p=u$, and otherwise follows from
\eqref{eq:orderofdistinguishers}.

Similarly, because $v$ and $q$ have the same color and $p$ precedes
both,
\(
  pv\in E(G)\iff pq\in E(G);
\)
this is immediate if $v=q$, and otherwise follows from
\eqref{eq:orderofdistinguishers}. Combining the preceding equivalences gives
\(
  uv\in E(G)
 \iff
  pq\in E(G)\).
Since $u\prec v$ was arbitrary among the pairs whose ordered colors
are $(a,b)$, all such pairs have the same adjacency status.  We may
therefore define
\(
  (a,b)\in D \iff
  pq\in E(G).
\) \end{proof}

\section{Reduction from \textsc{Monotone-NAE3-SAT} to a Specialized Coloring Extension}\label{sec:sattocolor}
\begin{theorem}\label{thm:sattocolor}
\textsc{Coloring extension} is \textsf{NP}-hard with the restriction that the given coloring is proper and every color class in the given coloring consists of exactly two vertices.
\end{theorem}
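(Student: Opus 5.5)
By Lemma~\ref{lem:orderofdistinguishers}, once $\chi$ is fixed, \textsc{coloring extension} asks only whether there is a total order $\prec$ of $V(G)$ satisfying \eqref{eq:orderofdistinguishers}. When every color class is a pair $\{u,u'\}$, this is a purely order-theoretic problem: each pair $P=\{u,u'\}$ requires every vertex of $\Delta_G(u,u')$ to lie strictly between $u$ and $u'$. I would therefore reduce \textsc{Monotone-NAE3-SAT} to this ``pair-betweenness'' problem. The constraints will be generated entirely by prescribing, for every two pairs $P=\{p,p'\}$ and $Q=\{q,q'\}$, the $2\times 2$ bipartite graph $G[P,Q]$. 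Within a pair there will be no edge, so $\chi$ is proper.

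\textbf{Step 1: constraint vocabulary.} A vertex $p$ lies in $\Delta(q,q')$ exactly when its row in $G[P,Q]$ is nonconstant. This yields a small menu of primitive constraints.
\begin{itemize}
\item \emph{No interaction:} $G[P,Q]$ is empty or complete.
\item \emph{Nesting:} $p$ is adjacent to both $q,q'$ and $p'$ to neither. Then $q,q'\in I_\prec(p,p')$ and no constraint is imposed on $Q$.
\item \emph{Crossing:} the single edge $pq$. Then $q\in I_\prec(p,p')$ and $p\in I_\prec(q,q')$, so the two intervals overlap with $p$ and $q$ in the common part and $p'$, $q'$ on opposite outer sides.
\end{itemize}
Each constraint depends only on $G[P,Q]$, and different pair-pairs use disjoint edge sets. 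Hence the distinguisher sets are exactly the union of the intended contributions, and no unintended constraints arise. This additivity is what makes the construction manageable.

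\textbf{Step 2: gadgets.}
\begin{itemize}
\item \emph{Frame.} A frame pair $F=\{f,f'\}$ fixes a reference, and an outer pair that nests everything can be added to normalize positions.
\item \emph{Variables.} For each variable $x$, a pair $X=\{x,x'\}$ crosses $F$ via the edge $xf$. Then $x$ lies inside $I_\prec(f,f')$, $f$ lies between $x$ and $x'$, and $x'$ lies outside $F$. The side of $F$ (left or right) on which $x'$ lies encodes the truth value.
\item \emph{Clauses.} For each clause $\{x,y,z\}$, add a constant number of pairs crossing or nesting with $X,Y,Z$ and with $F$. These must be chosen so that the pair-betweenness constraints are satisfiable exactly when $x',y',z'$ do not all lie on the same side of $F$.
\end{itemize}
My first attempt at the clause gadget would be a clause pair $C=\{c,c'\}$ crossing $F$, together with crossings between $c$ and each of $x',y',z'$. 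These force $c$ into all three intervals $I_\prec(x,x')$, $I_\prec(y,y')$ and $I_\prec(z,z')$ while forcing $x',y',z'$ into $I_\prec(c,c')$. Since $I_\prec(c,c')$ extends to only one side of $F$, this should exclude configurations in which the $x'$-ends spread in a way incompatible with the clause. If this single pair encodes the wrong predicate (for example ``all equal'' instead of ``not all equal''), I would combine two clause pairs, one on each side of $F$, to express NAE as a disjunction of two one-sided conditions.

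\textbf{Step 3: correctness.} For soundness, from a NAE assignment I would write down an explicit order: frame, then variable endpoints placed according to their truth values, then clause gadgets arranged consistently. I would then check \eqref{eq:orderofdistinguishers} pair by pair using the vocabulary of Step 1. For completeness, from any valid order I would read off the assignment from the side of each $x'$ and use the clause constraints to derive NAE. Membership in \textsf{NP} is immediate by guessing $\prec$ and checking via Lemma~\ref{lem:orderofdistinguishers}, and the construction has size linear in the formula.

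The main obstacle is the clause gadget. Each vertex belongs to exactly one pair, so every variable endpoint can be the ``anchor'' of only one constraint of its own. All further interactions must therefore be routed through the endpoints $x,x'$ using crossings, without those crossings accidentally constraining variables relative to one another. I expect to spend most of the effort on designing the clause gadget and exhaustively checking the relative orders of the few pairs involved.
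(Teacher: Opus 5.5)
There is a genuine gap, and it starts with the variable gadget. Your own analysis of a crossing shows that it rigidly couples the two pairs. With the single edge $xf$ you need $x\in I_\prec(f,f')$ and $f\in I_\prec(x,x')$. So $f\prec f'$ forces $x'\prec f\prec x\prec f'$, and $f'\prec f$ forces $f'\prec x\prec f\prec x'$. In both cases $x'$ lies outside $F$ on the side of $f$. Hence \emph{every} variable pair crossing $F$ via $xf$ has its $x'$ on the same side, fixed by the orientation of $F$ alone (a global mirror symmetry). The side of $x'$ encodes nothing, and all variables receive the same value in every valid order.

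Consequently your first clause attempt does not depend on the formula. Take $C$ crossing $F$ via $cf$, as for the variables. The single order listing all $c'$, then all variable endpoints $x'$, then $f$, then all $c$, then all variable endpoints $x$, then $f'$ satisfies every constraint of the construction, whatever the formula. With $cf'$ instead, $x'\in I_\prec(c,c')$ fails for every formula. Beyond this, the clause gadget, which you rightly call the crux, is never specified (``must be chosen so that \dots''). So there is as yet no reduction for Step~3 to verify.

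The missing idea is to store a truth value in the \emph{orientation} of a pair itself (which endpoint comes first), with no frame at all. Your crossing is then exactly a copy gadget: under a suitable labeling it forces two pairs to have equal orientation. So one master pair per variable can be crossed with one pair per occurrence. A single vertex may anchor many crossings, so the routing concern in your last paragraph is not an obstruction.

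To control global feasibility, the paper proves that for fixed orientations a valid order exists iff two digraphs on the pairs are acyclic. The left digraph has an arc $P\to Q$ when the left endpoint of $Q$ distinguishes the vertices of $P$. The right digraph has an arc $Q\to P$ when the right endpoint of $Q$ does. Sufficiency comes from listing all left endpoints in a topological order of the first digraph, followed by all right endpoints in a topological order of the second.

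In these digraphs, a crossing between equally oriented pairs contributes an arc whose direction flips with the common orientation; between oppositely oriented pairs it creates a $2$-cycle. A nesting contributes an arc of fixed direction. Alternate three crossings and three nestings around a $6$-cycle of pairs $X_0,Y_0,X_1,Y_1,X_2,Y_2$. This yields a directed cycle in the left digraph exactly when all $X_i$ have orientation $0$, and in the right digraph exactly when all have orientation $1$. That is precisely the not-all-equal condition.
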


 The restrictions which Theorem~\ref{thm:sattocolor} imposes on \textsc{Coloring extension} are stronger than those imposed on  \textsc{Coloring extension} in Theorem~\ref{thm:colortoletter}, because in Theorem~\ref{thm:sattocolor} every color class has exactly two vertices.
 The rest of this section is devoted to proving Theorem~\ref{thm:sattocolor}. First we define some notions and gadgets requisite for the eventual proof, and show how they work. Section~\ref{sec:reduction} contains the actual construction used in the reduction. 

\subsection{Pair Classes}\label{sec:pair-classes} We use the term \emph{pair class} to refer to a set of two elements, one of whose elements is labeled with a superscript of $0$ and the other of whose elements is labeled with a superscript of $1$. Throughout this reduction, every undirected graph considered has its vertices be elements of certain pair classes, and every directed graph considered has its vertices be certain pair classes themselves. The idea will be to have the way the $0$-vertex and $1$-vertex in a pair class compare according to $\prec$ --- where $\prec$ is the order component of a color-respecting letter representation of the constructed \textsc{Coloring extension}
instance --- encode the truth value of certain relevant aspects, including those pertaining to the source \textsc{Monotone-NAE3-SAT} instance $I$ from which we are reducing.

We use a function under the name $\tau$ as a tool in the analysis, which ultimately will be used to express what the orders for each of the pair classes should be in order for $I$ to be satisfied. $\tau$ takes as input a pair class $P_i = \{p_i^0, p^1_i \}$ and outputs $0$ or $1$. For an undirected graph $G$ whose vertex set is the union of certain pair classes, we interpret $\tau(P_i) = 1$ for $P_i \subseteq V(G)$ as $\tau$ prescribing that $p^1_i$ should succeed $p^0_i$ in any total-order component of a letter representation of~$G$.

For a letter representation $(\prec, \lambda, D)$ of $G$ to be consistent with the orientations prescribed by $\tau$ means that, for every pair class $P=\{p^0, p^1\}$ which is a subset of $V(G)$, $p^0 \prec p^1 \iff \tau(P)=1$. In the analysis, upon being  given such a function $\tau$, we consider the question of whether there exists a letter representation $(\prec, \lambda, D)$ of $G$ which is consistent with the orientations prescribed by $\tau$.

\subsection{Digraphs $L_\tau(G)$ and $R_\tau(G)$}\label{sec:support1reduction}

 Let $G$ be a graph and $\chi: V(G) \to \left[\frac {|V(G)|}  2\right]$ a coloring for which, as in the instances of \textsc{Coloring extension} to which we reduce, there are 2 vertices in each color class (thus $\frac {|V(G)|}  2$ is an integer). Let $\mathcal P$ be  the pair classes of $G$ under $\chi$, and for every $P \in \mathcal P$, fix an arbitrary superscripting scheme of its vertices, viz. $P = \{p^0,p^1 \}$.

Let $\tau:  \mathcal P\to \{0,1\}$ be a function on the domain of all pair classes of $G$, mapping to $0$ or $1$, which we interpret, as discussed in Section~\ref{sec:pair-classes}, as prescribing a certain orientation for each pair class. 

 Define the left and right endpoints of pair class $P$ according to $\tau$ as \[\ell_\tau(P) = \begin{cases} p^0, & \tau(P) = 1, \\ p^1,&\tau(P) = 0, \end{cases} \qquad  r_\tau(P) = \begin{cases} p^1, & \tau(P) = 1, \\ p^0,&\tau(P) = 0. \end{cases}\]

Define the vertex set of digraphs $L_\tau(G)$ and $R_\tau(G)$ to be $\mathcal P$.

\emph{Edges of $L_\tau(G)$ and $R_\tau(G)$}. If $P, Q \in \mathcal P$ for $P = \{p^0, p^1\}$, then their edges are defined as follows:
 \[P \to Q \text{ in } L_\tau(G) \iff \ell_\tau(Q) \in\Delta_G(p^0, p^1),\] \[Q \to P\text{ in } R_\tau(G)\iff r_\tau(Q) \in\Delta_G(p^0, p^1).\]

The left and right endpoints, as well as the digraphs here, will be used later throughout the reduction. A supplementary illustration of $L_\tau(G)$ on an example $G, \chi$, and $\tau$ is provided by Figure~\ref{fig:leftdigraph} in Section~\ref{sec:extra-figures} of the Appendix.

\begin{lemma}\label{lem:respecttau}
There is a letter representation of $G$ which respects the coloring $\chi$ and is consistent with the orientations prescribed by $\tau$ if and only if both $L_\tau(G)$ and $R_\tau(G)$ are acyclic. 
\end{lemma}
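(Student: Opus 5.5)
By Lemma~\ref{lem:orderofdistinguishers} (applied with $\chi$ as the letter function, which suffices by the remark in Section~\ref{subsec:color}), the left-hand side is equivalent to the existence of a total order $\prec$ of $V(G)$ that satisfies two conditions. First, for each pair class $P=\{p^0,p^1\}$, $\ell_\tau(P)\prec r_\tau(P)$. Second, every $x\in\Delta_G(p^0,p^1)$ lies strictly between $\ell_\tau(P)$ and $r_\tau(P)$, that is, $\ell_\tau(P)\prec x\prec r_\tau(P)$. Since every color class is a pair class, these are all the constraints coming from~\eqref{eq:orderofdistinguishers}. The plan is to reformulate this as a family of strict precedence constraints among the $2|\mathcal P|$ vertices and then show that this constraint system is acyclic exactly when $L_\tau(G)$ and $R_\tau(G)$ are.

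Concretely, form an auxiliary digraph $H$ on $V(G)$ with arcs $\ell_\tau(P)\to r_\tau(P)$ for every $P$, and arcs $\ell_\tau(P)\to x$ and $x\to r_\tau(P)$ for every $x\in\Delta_G(p^0,p^1)$. A suitable $\prec$ exists iff $H$ is acyclic: a topological order of $H$ works, and conversely any valid $\prec$ extends $H$. Since every vertex of $G$ is either some $\ell_\tau(Q)$ or some $r_\tau(Q)$, each distinguisher arc has one of four types: $\ell(P)\to\ell(Q)$, $\ell(P)\to r(Q)$, $\ell(Q)\to r(P)$, or $r(Q)\to r(P)$. The arcs $\ell(P)\to\ell(Q)$ are exactly the arcs $P\to Q$ of $L_\tau(G)$, and the arcs $r(Q)\to r(P)$ are exactly the arcs $Q\to P$ of $R_\tau(G)$.

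For the forward direction, suppose $H$ is acyclic. A cycle in $L_\tau(G)$ would give a cycle among the left endpoints in $H$, which is impossible, and the same holds for $R_\tau(G)$ and the right endpoints.

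The backward direction is the main obstacle. I would argue that any cycle in $H$ must stay entirely within the left endpoints or entirely within the right endpoints. Every arc of $H$ is left$\to$left, right$\to$right, or left$\to$right; there is no arc from a right endpoint to a left endpoint, because the mixed arcs only run from $\ell$ to $r$. So once a directed walk enters the right endpoints it can never return to the left, and a closed walk therefore never uses a left$\to$right arc. A cycle of left$\to$left arcs is a cycle of $L_\tau(G)$, and a cycle of right$\to$right arcs is a cycle of $R_\tau(G)$.

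The remaining care is with degenerate cases. Distinguishers exclude $p^0$ and $p^1$, so the only potential self-loops are $P\to P$ arcs, which would mean $\ell_\tau(P)\in\Delta_G(p^0,p^1)$; this cannot happen since $\ell_\tau(P)\in\{p^0,p^1\}$. Finally, a topological order of $H$ is consistent with $\tau$ by construction of the arc $\ell_\tau(P)\to r_\tau(P)$, which completes the equivalence.
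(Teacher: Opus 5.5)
Your proposal is correct and is essentially the paper's argument. The paper also uses Lemma~\ref{lem:orderofdistinguishers} to turn each arc of $L_\tau(G)$ (resp.\ $R_\tau(G)$) into a precedence between left (resp.\ right) endpoints, and for the converse it writes down a particular topological order of your auxiliary digraph: the left endpoints in a topological order of $L_\tau(G)$, followed by the right endpoints in a topological order of $R_\tau(G)$. Your observation that there are no right-to-left arcs is exactly why that order works, so the two proofs differ only in packaging.
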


\begin{proof}
Assume that there is a letter representation of $G$ which respects the coloring $\chi$ and is consistent with the orientations prescribed by $\tau$. We can assume that in this letter representation, the letter function $\lambda$ is $\chi$, as reasoned in Section~\ref{subsec:color}. Let $P \to Q$ be an arc in $L_\tau(G)$. Then $\ell_\tau(Q)$ distinguishes the two vertices of $P$, and because the vertices of $P$ have the same color, Lemma~\ref{lem:orderofdistinguishers} gives $\ell_\tau(P)\prec\ell_\tau(Q)$. It follows that, when the pair classes are ordered according to the
positions of their left endpoints in $\prec$, every arc of
$L_\tau(G)$ is directed from an earlier pair class to a later pair
class.  Therefore $L_\tau(G)$ is acyclic.
A symmetrical form of reasoning shows that $R_\tau(G)$ is acyclic. 

Conversely, assume that both digraphs are acyclic. Let $m = |\mathcal P|$. Let $P_1^L,P_2^L,\ldots,P_m^L$ be a topological ordering of $L_\tau(G)$, and let $P_1^R,P_2^R,\ldots,P_m^R$ be a topological ordering of $R_\tau(G)$. Consider the total order $\prec$ of $V(G)$ with \begin{equation}\label{eq:top} 
\ell_\tau(P_1^L)
  \prec
  \ell_\tau(P_2^L)
  \prec\cdots\prec
  \ell_\tau(P_m^L)
  \prec
  r_\tau(P_1^R)
  \prec
  r_\tau(P_2^R)
  \prec\cdots\prec
  r_\tau(P_m^R). 
\end{equation}

The order~$\prec$ realizes~$\tau$, because every left endpoint precedes every right endpoint in it. We want to show that \eqref{eq:orderofdistinguishers} holds of $G$, $\chi$, and $\prec$ here in order to conclude that $(\prec,\chi, D)$ is a letter representation of $G$, which would finish proving Lemma~\ref{lem:respecttau}. To show \eqref{eq:orderofdistinguishers}, suppose $x \in \Delta_G(\ell_\tau(P), r_\tau(P))$ for some color class $P$. We have $\chi(\ell_\tau(P)) = \chi(r_\tau(P))$. By the definition of $\Delta_G(\cdot)$, we have $x \notin P$. Thus $x \in Q$ for some $Q \in \mathcal P\setminus \{P\}$. If $x = \ell_\tau(Q)$, then $P \to Q$ is an arc of $L_\tau(G)$, so the topological order must give $\ell_\tau(P) \prec \ell_\tau(Q)$. By construction in \eqref{eq:top}, we have $\ell_\tau(Q) \prec r_\tau(P)$ and thus $\ell_\tau(P) \prec x \prec r_\tau(P)$ so $x \in I_\prec(\ell_\tau(P), r_\tau(P))$. In the case of $x = r_\tau(Q)$, a symmetrical form of reasoning with respect to $R_\tau(G)$ yields $x \in I_\prec(\ell_\tau(P), r_\tau(P))$. Thus Condition \eqref{eq:orderofdistinguishers} is satisfied and Lemma~\ref{lem:orderofdistinguishers} gives the existence of the desired letter representation. \end{proof}

\subsection{Constructed Gadgets}\label{sec:gadgets}
Each of the gadgets defined in this subsection takes pair classes as input and returns a graph whose vertex set consists of exactly all the vertices in the pair classes said gadget takes as input. For each gadget, there is an associated coloring with the graph output by it: the coloring whose color classes are exactly the pair classes of the input to the gadget.

\begin{remark}
In the analyses in this section subsequent to the introduction of certain gadgets, we at times use the following immediate observation: if $P$ and $Q$ are distinct pair classes, then all arcs of $L_\tau(G)$ and $R_\tau(G)$ whose endpoints are $P$ and $Q$ are determined entirely by the induced subgraph $G[P\cup Q]$, because the definitions of those arcs depend only on whether a vertex of one pair class distinguishes the two vertices of the other.
\end{remark}

\subsubsection{Equality Gadget}\label{sec:eqgadget}
Let $A = \{a^0, a^1\}$ and $B = \{ b^0, b^1\}$ be two pair classes.  
The equality gadget $E(A,B)$ is the graph whose vertex set is $A \cup B$ and whose edge set consists of only the edge $a^1b^0$.
Let $\chi$ be the coloring of the graph $E(A,B)$ whose color classes are $A$ and $B$. A supplementary illustration  of $E(A,B)$
 is depicted by Figure~\ref{fig:wires} in Section~\ref{sec:extra-figures} of the Appendix. 

\subsubsection{Containment Gadget}\label{sec:cntngadget}
Let $A = \{a^0, a^1\}$ and $B = \{ b^0, b^1\}$ be two pair classes.  $C(A,B)$ is the graph whose vertex set is $A \cup B$ and whose edge set consists of only  $a^1b^0$ and $a^1b^1$. Let $\chi$ be the coloring of the graph $C(A,B)$ whose color classes are $A$ and $B$. 
An auxiliary illustration  of $C(A,B)$
 is depicted by Figure~\ref{fig:wires} in Section~\ref{sec:extra-figures} of the Appendix. 
\begin{lemma}[Arcs in Digraphs for Equality and Containment Gadgets]\label{lem:basic-gadget-arcs}
Let $A = \{a^0, a^1\}$ and $B = \{b^0, b^1\}$  be pair classes. Let $\tau:\{A,B\}\to\{0,1\}$ be an orientation assignment. The arc sets of $L_\tau(E(A,B))$ and $R_\tau(E(A,B))$ are given by the following table:
\[
\begin{array}{c|c|c}
(\tau(A),\tau(B))
  & L_\tau(E(A,B))
  & R_\tau(E(A,B)) \\
\hline
(1,1) & \{A\to B\} & \{A\to B\} \\
(0,0) & \{B\to A\} & \{B\to A\} \\
(0,1) & \{A\to B,\ B\to A\} & \varnothing \\
(1,0) & \varnothing & \{A\to B,\ B\to A\}
\end{array}
\]
Moreover, for the containment gadget, the only arc of $L_\tau(C(A,B))$ is $A\to B$, and the only arc of $R_\tau(C(A,B))$ is $B\to A$.
\end{lemma}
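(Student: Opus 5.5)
This is a direct verification from the definitions of $\Delta_G$, $\ell_\tau$, $r_\tau$ and the arcs of $L_\tau$ and $R_\tau$. The only arcs possible are $A\to B$ and $B\to A$ (no loops arise, since $\Delta_G(p^0,p^1)$ excludes $p^0,p^1$). So the first step is to compute the two distinguisher sets in each gadget. Then I read off the arcs by checking which endpoint of the other class lies in the relevant distinguisher set.

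For $E(A,B)$ the only edge is $a^1b^0$. Hence $N(a^0)=\emptyset$ and $N(a^1)=\{b^0\}$, so $\Delta(a^0,a^1)=\{b^0\}$. Symmetrically, $N(b^0)=\{a^1\}$ and $N(b^1)=\emptyset$, so $\Delta(b^0,b^1)=\{a^1\}$. By definition:
\begin{itemize}
\item $A\to B$ is an arc of $L_\tau$ iff $\ell_\tau(B)=b^0$, i.e.\ iff $\tau(B)=1$;
\item $B\to A$ is an arc of $L_\tau$ iff $\ell_\tau(A)=a^1$, i.e.\ iff $\tau(A)=0$;
\item $B\to A$ is an arc of $R_\tau$ iff $r_\tau(B)=b^0$, i.e.\ iff $\tau(B)=0$;
\item $A\to B$ is an arc of $R_\tau$ iff $r_\tau(A)=a^1$, i.e.\ iff $\tau(A)=1$.
\end{itemize}
Running through the four values of $(\tau(A),\tau(B))$ gives exactly the four rows of the table. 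For example, $(0,1)$ gives $L_\tau$ both arcs and $R_\tau$ none, and $(1,1)$ gives $A\to B$ in both digraphs.

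For $C(A,B)$ the edges are $a^1b^0$ and $a^1b^1$. Then $N(a^0)=\emptyset$ and $N(a^1)=\{b^0,b^1\}$, so $\Delta(a^0,a^1)=B$. On the other side $N(b^0)=N(b^1)=\{a^1\}$, so $\Delta(b^0,b^1)=\emptyset$. Since $B\subseteq\Delta(a^0,a^1)$, both $\ell_\tau(B)$ and $r_\tau(B)$ lie in it regardless of $\tau$. This yields $A\to B$ in $L_\tau$ and $B\to A$ in $R_\tau$. Since $\Delta(b^0,b^1)$ is empty, no arc has $B$ as its source in $L_\tau$ or $A$ as its source in $R_\tau$. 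So these are the only arcs.

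There is no real obstacle. The only point needing care is the direction convention: in $L_\tau$ the arc runs from the class being distinguished to the class supplying the distinguisher, and in $R_\tau$ it runs the opposite way. I would check this convention against the definition at each step, especially in the $R_\tau$ column.
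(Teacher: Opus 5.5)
Your proposal is correct and follows the paper's proof essentially step for step. You compute $\Delta(a^0,a^1)$ and $\Delta(b^0,b^1)$ in each gadget, derive the same four equivalences (e.g.\ $A\to B\in L_\tau$ iff $\tau(B)=1$), read off the table, and treat the containment gadget via $\Delta(a^0,a^1)=B$ and $\Delta(b^0,b^1)=\varnothing$; your remark that no loops $P\to P$ can occur is a small point the paper leaves implicit.
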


\noindent
Section~\ref{sec:arc-table-proof} of the Appendix performs the checks which show Lemma~\ref{lem:basic-gadget-arcs}.

\begin{figure}[ht]
  \centering
  \begin{tikzpicture}[
      vertex/.style={
        circle,
        draw,
        fill=white,
        minimum size=7mm,
        inner sep=0pt,
        font=\small
      },
      pairclass/.style={
        draw,
        dashed,
        rounded corners,
        inner sep=4pt
      },
      graphedge/.style={
        thick,
        line cap=round
      }, scale = 1, transform shape
    ]

    %
    %

    \node[vertex] (x00) at (90:3.7) {$x_0^0$};
    \node[vertex] (x01) at (90:2.8) {$x_0^1$};

    \node[vertex] (y00) at (30:2.8) {$y_0^0$};
    \node[vertex] (y01) at (30:3.7) {$y_0^1$};

    \node[vertex] (x10) at (-30:3.7) {$x_1^0$};
    \node[vertex] (x11) at (-30:2.8) {$x_1^1$};

    \node[vertex] (y10) at (-90:2.8) {$y_1^0$};
    \node[vertex] (y11) at (-90:3.7) {$y_1^1$};

    \node[vertex] (x20) at (-150:3.7) {$x_2^0$};
    \node[vertex] (x21) at (-150:2.8) {$x_2^1$};

    \node[vertex] (y20) at (150:2.8) {$y_2^0$};
    \node[vertex] (y21) at (150:3.7) {$y_2^1$};

    \draw[graphedge] (x01) -- (y00);
    \draw[graphedge] (x11) -- (y10);
    \draw[graphedge] (x21) -- (y20);

    \draw[graphedge] (x01) -- (y20);
    \draw[graphedge] (x01) -- (y21);

    \draw[graphedge] (x11) -- (y00);
    \draw[graphedge] (x11) -- (y01);

    \draw[graphedge] (x21) -- (y10);
    \draw[graphedge] (x21) -- (y11);

    \node[
      pairclass,
      fit=(x00)(x01),
      label=above:$X_0$
    ] {};

    \node[
      pairclass,
      fit=(y00)(y01),
      label=right:$Y_0$
    ] {};

    \node[
      pairclass,
      fit=(x10)(x11),
      label=right:$X_1$
    ] {};

    \node[
      pairclass,
      fit=(y10)(y11),
      label=below:$Y_1$
    ] {};

    \node[
      pairclass,
      fit=(x20)(x21),
      label=left:$X_2$
    ] {};

    \node[
      pairclass,
      fit=(y20)(y21),
      label=left:$Y_2$
    ] {};

  \end{tikzpicture}

  \caption{The NAE clause gadget $N(\mathcal P)$. Dashed boxes
  indicate the prescribed pair classes, and all subscripts are
  interpreted modulo $3$.}
  \label{fig:nae-clause-gadget}
\end{figure}
\subsubsection{NAE Clause Gadget}\label{sec:naegadget}
For each $i \in \mathbb{Z}_3$, let $X_i = \{ x^0_i, x^1_i \}$ and $Y_i = \{y^0_i, y^1_i \}$ be pair classes. Any computation with $i$ is considered in mod $3$, so if for example $i = 0$, then $i -1 = 2$.

Let $\mathcal P := \{ X_i, Y_i : i \in \mathbb{Z}_3 \}$. The NAE clause gadget on $\mathcal P$ is defined to be the graph \[N(\mathcal P) = \bigcup_{i \in \mathbb Z_3} (E(X_i, Y_i) \cup C(X_i, Y_{i-1})).\]

This definition uses the union operation on graphs, which was defined in Section~\ref{sec:union}. In particular, even though  vertices in the pair class $X_i$ are present in both of the graphs $E(X_i, Y_i)$ and $C(X_i, Y_{i-1})$, they only appear once in $N(\mathcal P)$.

Thus $V(N(\mathcal P))$ is exactly the vertices in each of the 6 pair classes in $\mathcal P$, and \begin{equation}\label{eq:cledges}E(N(\mathcal P)) = \{ x^1_iy^0_i, x_i^1y^0_{i-1}, x^1_iy^1_{i-1} : i \in \mathbb{Z}_3 \}. \end{equation}

Let $\chi$ be the coloring of $N(\mathcal P)$  whose color classes are $\mathcal P$.
Figure~\ref{fig:nae-clause-gadget} gives an illustration of $N(\mathcal P)$.

\begin{lemma}\label{lem:naegadget} Let $X_i = \{ x^0_i, x^1_i \}$ and $Y_i = \{y^0_i, y^1_i \}$ be pair classes, and let $\mathcal P =  \{ X_i, Y_i : i \in \mathbb{Z}_3 \}$.
Let $\tau:\mathcal P\to\{0,1\}$ be any orientation assignment. Both $L_\tau(N(\mathcal P))$ and $R_\tau(N(\mathcal P))$ are acyclic if and only if
\(
\tau(X_i)=\tau(Y_i)\) for all \(i\in\mathbb Z_3\) and
\(
(\tau(X_0),\tau(X_1),\tau(X_2))
\notin\{(0,0,0),(1,1,1)\}.
\)
\end{lemma}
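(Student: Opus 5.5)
The plan is to compute the arc sets of $L_\tau(N(\mathcal P))$ and $R_\tau(N(\mathcal P))$ explicitly and then read off acyclicity. By the Remark in Section~\ref{sec:gadgets}, arcs between two pair classes depend only on the induced subgraph on their union. The edge set \eqref{eq:cledges} is an edge-disjoint union of the gadgets $E(X_i,Y_i)$ and $C(X_i,Y_{i-1})$. Each pair of classes $\{X_i,Y_i\}$ and $\{X_i,Y_{i-1}\}$ induces exactly one gadget, and every other pair of classes ($X_i,X_j$ or $Y_i,Y_j$) induces no edges, so it contributes no arcs. I will check that the pairs $\{X_i,Y_{i-1}\}$ and $\{X_{i-1},Y_{i-1}\}$ are distinct, so that no two gadgets overlap on the same pair. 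Then Lemma~\ref{lem:basic-gadget-arcs} gives all arcs directly. The containment gadgets contribute $X_i\to Y_{i-1}$ in $L_\tau$ and $Y_{i-1}\to X_i$ in $R_\tau$, independently of $\tau$. The equality gadget $E(X_i,Y_i)$ contributes arcs according to the table.

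Necessity comes first. Suppose $\tau(X_i)\neq\tau(Y_i)$ for some $i$. By the table, one of the two digraphs contains the 2-cycle $X_i\to Y_i\to X_i$, so that digraph is not acyclic. Hence we may assume $\tau(X_i)=\tau(Y_i)=:t_i$ for all $i$. In that case $E(X_i,Y_i)$ contributes the arc $X_i\to Y_i$ to both digraphs when $t_i=1$, and the arc $Y_i\to X_i$ to both when $t_i=0$. Now take $t\equiv 1$. In $L_\tau$ we get the cycle $X_0\to Y_2$? No: we must follow arcs that exist. $L_\tau$ contains $X_i\to Y_i$ and $X_i\to Y_{i-1}$, and these only leave $X$'s, so $L_\tau$ is acyclic. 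In $R_\tau$ the arcs are $X_i\to Y_i$ and $Y_{i-1}\to X_i$, giving the cycle $X_0\to Y_0\to X_1\to Y_1\to X_2\to Y_2\to X_0$. Symmetrically, when $t\equiv 0$, $L_\tau$ has arcs $Y_i\to X_i$ and $X_i\to Y_{i-1}$, giving the cycle $X_0\to Y_2\to X_2\to Y_1\to X_1\to Y_0\to X_0$.

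Sufficiency is the step needing care. Assume $t_i$ is constant on neither all-0 nor all-1. In $L_\tau$, the arcs $X_i\to Y_{i-1}$ always exist, and the arc $Y_i\to X_i$ exists iff $t_i=0$. Every arc alternates between an $X$ and a $Y$, and out of each vertex there is at most one arc of each type. So any cycle must be the full 6-cycle, and it requires $Y_i\to X_i$ for every $i$, i.e.\ all $t_i=0$. The same argument applies to $R_\tau$: its arcs are $Y_{i-1}\to X_i$ always, and $X_i\to Y_i$ iff $t_i=1$, so a cycle would require all $t_i=1$. Hence both digraphs are acyclic. The main obstacle is only this bookkeeping: correctly orienting the containment arcs, and confirming that the only possible cycles are the full hexagons, since the out-degree structure forces any cycle to traverse every gadget.
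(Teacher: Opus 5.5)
Your proof is correct and takes essentially the same route as the paper: you compute the arcs from Lemma~\ref{lem:basic-gadget-arcs} and get a directed $2$-cycle from $E(X_i,Y_i)$ whenever $\tau(X_i)\neq\tau(Y_i)$. Otherwise both digraphs are orientations of the hexagon $X_0-Y_0-X_1-Y_1-X_2-Y_2-X_0$, so a directed cycle exists exactly when every equality arc agrees with the containment arcs (all $0$ for $L_\tau$, all $1$ for $R_\tau$). Two small slips are harmless. First, your list of arc-free pairs leaves out $\{X_i,Y_{i+1}\}$, which are also edgeless by \eqref{eq:cledges}. Second, your ``at most one arc of each type'' remark is a loose justification that any cycle is the full hexagon; the cleaner reason, which the paper uses, is that the underlying graph is a simple $6$-cycle with each edge oriented once.
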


\begin{proof}
We first show that if both $L_\tau(N(\mathcal P))$ and $R_\tau(N(\mathcal P))$ are acyclic, then $\tau(X_i) = \tau(Y_i)$ for all $i\in\mathbb Z_3$. Assume the antecedent of this if statement, and for the sake of contradiction, assume that $\tau(X_i)\neq\tau(Y_i)$ for some $i\in\mathbb Z_3$. Since
$N(\mathcal P)[X_i\cup Y_i]=E(X_i,Y_i)$,
the table for the arc set of the equality gadget in Lemma~\ref{lem:basic-gadget-arcs} gives a directed $2$-cycle in
$L_\tau(N(\mathcal P))$ when
$(\tau(X_i),\tau(Y_i))=(0,1)$, and a directed $2$-cycle in
$R_\tau(N(\mathcal P))$ when
$(\tau(X_i),\tau(Y_i))=(1,0)$, contradicting the acyclicity of $L_\tau(N(\mathcal P))$ and $R_\tau(N(\mathcal P))$.

Now we show that if $\tau(X_i) = \tau(Y_i)$ for all $i\in\mathbb Z_3$, then  $L_\tau(N(\mathcal P))$ and $R_\tau(N(\mathcal P))$ are acyclic if and only if  \(
(\tau(X_0),\tau(X_1),\tau(X_2))
\notin\{(0,0,0),(1,1,1)\}.
\) In tandem with what has been shown earlier in this proof, we would have a sufficiently powerful logical tool to conclude Lemma~\ref{lem:naegadget}. Assume that $\tau(X_i) = \tau(Y_i)$ for all $i\in\mathbb Z_3$. For every
$i\in\mathbb Z_3$, by construction we have 
\(
N(\mathcal P)[X_i\cup Y_{i-1}]=C(X_i,Y_{i-1})\) and \(N(\mathcal P)[X_i\cup Y_i]=E(X_i,Y_i) \).
By Lemma~\ref{lem:basic-gadget-arcs}, the only arc of
$L_\tau(N(\mathcal P))$ whose endpoints are $X_i$ and $Y_{i-1}$ is
$X_i\to Y_{i-1}$, and the only arc of $R_\tau(N(\mathcal P))$ whose
endpoints are $X_i$ and $Y_{i-1}$ is $Y_{i-1}\to X_i$. The equality
gadget $E(X_i,Y_i)$ contributes
\begin{equation}\label{eq:casearc}
\begin{cases}
X_i\to Y_i \text{ in both $L_\tau(N(\mathcal P))$ and $R_\tau(N(\mathcal P))$},&\tau(X_i)=1,\\
Y_i\to X_i \text{ in both $L_\tau(N(\mathcal P))$ and $R_\tau(N(\mathcal P))$},&\tau(X_i)=0.
\end{cases}
\end{equation}

There are no other arcs in either $L_\tau(N(\mathcal P))$ or $R_\tau(N(\mathcal P))$: if two pair classes are not of the form
$X_i,Y_i$ or $X_i,Y_{i-1}$, then by~\eqref{eq:cledges} there are no
edges between their vertices, so neither pair class can distinguish the
vertices of the other. Hence, the underlying undirected graph of each of $L_\tau(N(\mathcal P))$ and $R_\tau(N(\mathcal P))$ is exactly the cycle
\begin{equation}\label{eq:cycle}
X_0-Y_0-X_1-Y_1-X_2-Y_2-X_0.
\end{equation}

The arcs contributed to $L_\tau(N(\mathcal P))$ by the containment gadgets
are $X_i\to Y_{i-1}$ for $i\in\mathbb Z_3$. Thus this digraph contains a
directed cycle exactly when every equality-gadget arc in
\eqref{eq:casearc} is directed as $Y_i\to X_i$, which by \eqref{eq:casearc} gives the condition 
\begin{equation}\label{eq:lcyclecond}
L_\tau(N(\mathcal P))\text{ is cyclic}
\iff
\tau(X_0)=\tau(X_1)=\tau(X_2)=0.
\end{equation}
Similarly,
\begin{equation}\label{eq:rcyclecond}
R_\tau(N(\mathcal P))\text{ is cyclic}
\iff
\tau(X_0)=\tau(X_1)=\tau(X_2)=1.
\end{equation}

Thus under the assumption that  $\tau(X_i)=\tau(Y_i)$ for all $i \in \mathbb Z_3$, $L_\tau(N(\mathcal P))$ and $R_\tau(N(\mathcal P))$  are acyclic if and only if $\tau(X_0),\tau(X_1),\tau(X_2)$ are not all equal. As mentioned earlier in this proof, this proves Lemma~\ref{lem:naegadget}.
\end{proof}

\subsection{The Reduction Itself}\label{sec:reduction}
We now have the machinery we need to state and prove the overall reduction from \textsc{Monotone-NAE3-SAT} to \textsc{Coloring extension} where every color class has two vertices and the coloring is proper. Consider an instance~$I$ of \textsc{Monotone-NAE3-SAT} with a set of variables~$Z$ and set of clauses~$\mathcal C$, such that $Z$ is the set of exactly those variables occurring in these clauses. 

If $I$ has no clauses, we output the edgeless graph on  two vertices, and output the coloring which assigns both of these vertices $1$. Otherwise, we apply the following construction, which involves defining $G_I$ and $\chi_I$ in terms of the gadgets previously defined. Figure~\ref{fig:reduction} gives an illustration of $G_I$ and $\chi_I$ for an example \textsc{Monotone-NAE3-SAT} instance under the gadgetized viewpoint that will help our analysis. For each $z \in Z$, create the pair class \[M_z = \{m^0_z, m_z^1 \}.\]

For each $C \in \mathcal C$ and $i \in \mathbb Z_3$, create the pair classes \[Y_{C,i}=\{y_{C,i}^0,y_{C,i}^1\}, \qquad  X_{C,i}=\{x_{C,i}^0,x_{C,i}^1\}.\]

All pair classes $M_z$, $X_{C,i}$, and $Y_{C,i}$ are pairwise disjoint. 

For $C \in \mathcal C$, define $\mathcal P_C = \{X_{C,i}, Y_{C,i} : i \in \mathbb Z_3\}$.

For each clause $C$, fix an order to the variables\footnote{Because the boolean formula in $I$ is monotone, it does not have any negations, so we use the term ``variables" instead of ``literals" here. Some of the $z_{C,0}$ may of course be the same due to variables occurring in multiple clauses.} which occur in it as \(C = (z_{C,0}, z_{C,1}, z_{C,2}).\)

Set 
\begin{equation}\label{eq:construction}G_I = (\bigcup_{C \in \mathcal C} N(\mathcal P_C)) \cup (\bigcup_{(C,i) \in \mathcal C \times \mathbb{Z}_3}E(M_{z_{C,i}},X_{C,i})) \end{equation} where $N(\mathcal P_C)$ is the NAE gadget on $\mathcal P_C$ which involves $X_i$, $Y_i$, $x_i^b$, and $y_i^b$ in the definition of $N(\cdot)$ being instantiated by $X_{C,i}$, $Y_{C,i}$, $x^b_{C,i}$, and $y^b_{C,i}$, respectively.

Let $\chi_I $ be the coloring whose color classes are each of the pair classes thus constructed for $G_I$; let $\mathcal P_I$ denote the set of all such constructed pair classes.

\begin{figure}[ht]
  \centering
  \begin{tikzpicture}[
      pairclass/.style={
        draw,
        dashed,
        rounded corners=2pt,
        minimum width=1.38cm,
        minimum height=7mm,
        inner sep=2pt,
        font=\small
      },
      master/.style={
        pairclass,
        fill=gray!12
      },
      occurrence/.style={
        pairclass,
        fill=blue!12
      },
      auxiliary/.style={
        pairclass,
        fill=orange!14
      },
      clausebox/.style={
        draw,
        rounded corners=4pt,
        line width=0.9pt,
        minimum width=5.6cm,
        minimum height=3.8cm
      },
      copywire/.style={
        line width=0.9pt
      },
      wirelabel/.style={
        circle,
        draw,
        fill=white,
        minimum size=5mm,
        inner sep=0pt,
        font=\scriptsize
      }, scale = 1, transform shape
    ]


    \node[font=\large\bfseries] at (6,7.65)
      {Example \textsc{Monotone NAE-3-SAT} instance};

    \node[font=\large] at (6,6.95)
      {$I=C_1\wedge C_2,\qquad
        C_1=(a,b,c),\qquad
        C_2=(c,d,e)$};


    \node[font=\large\bfseries] at (6,6.25)
      {Constructed $G_I$ and $\chi_I$};


    \node[master] (Ma) at (0.8,5.55) {$M_a$};
    \node[master] (Mb) at (2.4,5.55) {$M_b$};
    \node[master] (Mc) at (6.0,5.55) {$M_c$};
    \node[master] (Md) at (9.6,5.55) {$M_d$};
    \node[master] (Me) at (11.2,5.55) {$M_e$};


    \node[clausebox] (C1box) at (2.4,1.75) {};
    \node[clausebox] (C2box) at (9.6,1.75) {};


    \node[occurrence] (X10) at (0.8,3.20)
      {$X_{C_1,0}$};
    \node[occurrence] (X11) at (2.4,3.20)
      {$X_{C_1,1}$};
    \node[occurrence] (X12) at (4.0,3.20)
      {$X_{C_1,2}$};

    \node[occurrence] (X20) at (8.0,3.20)
      {$X_{C_2,0}$};
    \node[occurrence] (X21) at (9.6,3.20)
      {$X_{C_2,1}$};
    \node[occurrence] (X22) at (11.2,3.20)
      {$X_{C_2,2}$};


    \node[font=\small\bfseries] at (2.4,2.52)
      {The gadget $N(\mathcal P_{C_1})$};

    \node[auxiliary] (Y10) at (0.8,1.55)
      {$Y_{C_1,0}$};
    \node[auxiliary] (Y11) at (2.4,1.55)
      {$Y_{C_1,1}$};
    \node[auxiliary] (Y12) at (4.0,1.55)
      {$Y_{C_1,2}$};

    \node[
      font=\scriptsize,
      align=center,
      text width=4.8cm
    ] at (2.4,0.58)
      {This NAE gadget also contains edges for equality\\
       and containment gadgets};

    \node[font=\small\bfseries] at (9.6,2.52)
      {The gadget $N(\mathcal P_{C_2})$};

    \node[auxiliary] (Y20) at (8.0,1.55)
      {$Y_{C_2,0}$};
    \node[auxiliary] (Y21) at (9.6,1.55)
      {$Y_{C_2,1}$};
    \node[auxiliary] (Y22) at (11.2,1.55)
      {$Y_{C_2,2}$};

    \node[
      font=\scriptsize,
      align=center,
      text width=4.8cm
    ] at (9.6,0.58)
      {This NAE gadget also contains edges for equality\\
       and containment gadgets};


    \draw[copywire]
      (Ma.south) --
      node[wirelabel,pos=0.50] {$E$}
      (X10.north);

    \draw[copywire]
      (Mb.south) --
      node[wirelabel,pos=0.50] {$E$}
      (X11.north);

    \draw[copywire]
      (Mc.south west)
      to[out=-115,in=85]
      node[wirelabel,pos=0.52] {$E$}
      (X12.north);

    \draw[copywire]
      (Mc.south east)
      to[out=-65,in=95]
      node[wirelabel,pos=0.52] {$E$}
      (X20.north);

    \draw[copywire]
      (Md.south) --
      node[wirelabel,pos=0.50] {$E$}
      (X21.north);

    \draw[copywire]
      (Me.south) --
      node[wirelabel,pos=0.50] {$E$}
      (X22.north);

  \end{tikzpicture}
  \caption{
    The reduction applied to the instance
    $I=C_1\wedge C_2$, where
    $C_1=(a,b,c)$ and $C_2=(c,d,e)$.
    Each dashed rounded node denotes one pair class in the coloring $\chi_I$. Each large  box with a bold black outline denotes an NAE clause
    gadget; there is one such gadget for each clause of $I$. The lines labeled $E$ denote the edges involved in the equality gadgets
$E(M_{z_{C,i}},X_{C,i})$: each pair class for an occurrence of a variable $z$ in a clause is connected by an equality gadget edge to the pair class $M_z$ representing $z$. 
  }\label{fig:reduction}
\end{figure}

\begin{lemma}[Structure of the constructed instance]
\label{lem:construction-structure} 
For every clause $C \in \mathcal C$, define \(V_C = \bigcup_{P \in \mathcal P_C} P\). That is, $V_C$ consists of all of the vertices in the pair classes of $\mathcal P_C$. Define $\mathcal M = \bigcup_{z \in Z} \{M_z\}$ and  $V_\mathcal M = \bigcup_{z \in Z}M_z$. The following statements hold: \begin{enumerate}
\item For every $C \in \mathcal C$, $G_I[V_C] = N(\mathcal P_C)$. The restriction of $\chi_I$ to $V_C$ has  color classes precisely the members of $\mathcal P_C$. Moreover, $G_I - V_\mathcal M = \bigsqcup_{C \in \mathcal C} N(\mathcal P_C)$.
\item For every $C \in \mathcal C$ and $i \in \mathbb Z_3$, $G_I[M_{z_{C,i}} \cup X_{C,i} ]= E(M_{z_{C,i}}, X_{C,i})$. Furthermore, every edge of $G_I$ which is incident to a vertex in $M_z$ has the form \(m_z^1x_{C,i}^0\) for some $C,i$ for which $z_{C,i} = z$.
\item Let $\tau : \mathcal P_I \to \{0,1\}$ be any orientation assignment, and, for each $C \in \mathcal C$, define $\tau_C = \tau|_{\mathcal P_C}$. Then $L_\tau(G_I) - \mathcal M = \bigsqcup_{C\in\mathcal C} L_{\tau_C}(N(\mathcal P_C))$ and $R_\tau(G_I) - \mathcal M = \bigsqcup_{C\in\mathcal C} R_{\tau_C}(N(\mathcal P_C))$.
\end{enumerate}

\end{lemma}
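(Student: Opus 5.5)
The plan is to make every claim follow from an explicit list of the edges of $G_I$, obtained by unfolding \eqref{eq:construction}. By \eqref{eq:cledges}, each $N(\mathcal P_C)$ contributes exactly the edges $x^1_{C,i}y^0_{C,i}$, $x^1_{C,i}y^0_{C,i-1}$ and $x^1_{C,i}y^1_{C,i-1}$ for $i\in\mathbb Z_3$. By the definition of the equality gadget, each $E(M_{z_{C,i}},X_{C,i})$ contributes exactly the edge $m^1_{z_{C,i}}x^0_{C,i}$. Since all the pair classes are pairwise disjoint, $E(G_I)$ is precisely this union of edges. I will classify the edges as follows. An edge of the first kind has both endpoints in a single $V_C$. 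An edge of the second kind has exactly one endpoint in $V_\mathcal M$, and that endpoint is some $m^1_z$, joined to some $x^0_{C,i}$ with $z_{C,i}=z$.

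Parts 1 and 2 are then direct. For part 1, consider an edge with both endpoints in $V_C$. It cannot be of the second kind, because that would need an endpoint in $V_\mathcal M$. It also cannot come from another clause $C'\neq C$, because $V_{C'}\cap V_C=\emptyset$. Hence $G_I[V_C]=N(\mathcal P_C)$. The claim about the color classes of $\chi_I$ restricted to $V_C$ holds by definition. Deleting $V_\mathcal M$ removes every edge of the second kind, and there are no edges between distinct $V_C$. This gives the disjoint union $G_I-V_\mathcal M=\bigsqcup_C N(\mathcal P_C)$.

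For part 2, look at an edge inside $M_{z_{C,i}}\cup X_{C,i}$. There is no edge inside a pair class. First-kind edges never use $x^0$ vertices, so they cannot meet $X_{C,i}$ from $M$. So the only possible edges are of the second kind, of the form $m^1_z x^0_{C',j}$ with $z=z_{C,i}$. To lie inside $M_{z_{C,i}}\cup X_{C,i}$ such an edge needs $(C',j)=(C,i)$. Here I must use that $z_{C,0},z_{C,1},z_{C,2}$ are distinct, which holds because clauses have three distinct variables. This also makes the union in \eqref{eq:construction} well defined. The second statement of part 2 is exactly the classification above.

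Part 3 is the only step that needs real care. The two sides live on the same vertex set $\mathcal P_I\setminus\mathcal M=\bigcup_C\mathcal P_C$, so I only need to compare arcs. I will use the Remark: the arcs between two pair classes $P,Q$ depend only on $G[P\cup Q]$ and on $\tau$ restricted to $\{P,Q\}$. If $P,Q\in\mathcal P_C$, then $G_I[P\cup Q]=N(\mathcal P_C)[P\cup Q]$ by part 1, and $\tau$ agrees with $\tau_C$ on both classes, so the arcs coincide. If $P\in\mathcal P_C$ and $Q\in\mathcal P_{C'}$ with $C\neq C'$, then $G_I$ has no edges between $P$ and $Q$. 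Hence neither class distinguishes the vertices of the other, and there are no arcs in $L_\tau(G_I)$ or $R_\tau(G_I)$, exactly as in the disjoint union.

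The point that needs the most care is that the digraphs are defined through $\Delta_{G_I}$, which involves whole neighborhoods. This is why I rely on the Remark's locality: membership of a single vertex of $Q$ in $\Delta(p^0,p^1)$ depends only on its adjacency to $p^0$ and $p^1$. The same argument gives the statement for $R_\tau$.
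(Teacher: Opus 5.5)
Your proof is correct and follows essentially the same route as the paper. Items~1 and~2 come from the fact that the NAE gadgets of different clauses have disjoint vertex sets and that each equality gadget contributes only the single edge $m^1_{z_{C,i}}x^0_{C,i}$. Item~3 comes from the locality of arcs, which depend only on $G_I[P\cup Q]$ and on $\tau$ at $P$ and $Q$, together with the absence of edges between $V_C$ and $V_{C'}$.

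One quibble: in part~2 you do not need the distinctness of $z_{C,0},z_{C,1},z_{C,2}$, and it plays no role in the union defining $G_I$ being well defined. The condition $x^0_{C',j}\in X_{C,i}$ already forces $(C',j)=(C,i)$, because the pair classes are pairwise disjoint.
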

 Per Section~\ref{sec:union}, $G_I - V_\mathcal M$ is the subgraph of $G_I$ which is induced by deleting all vertices in $V_\mathcal M$ from $V(G_I)$, and the meaning of ``$-$" is of course the same in  $L_\tau(G_I) - \mathcal M$ and $R_\tau(G_I) - \mathcal M$.

 \begin{proof}[Proof sketch of Lemma~\ref{lem:construction-structure}]
Items~1 and~2 follow directly from the construction in \eqref{eq:construction}; they can also be seen by looking at  Figure~\ref{fig:reduction}. Item~3 follows because after deleting $\mathcal{M}$ from the digraphs, only pair classes belonging to  $\mathcal P_C$ for some $C \in \mathcal C$ remain in them, and there are no edges in $G_I$ between $V_C$ and $\mathcal V_{C'}$ for unequal clauses $C$ and $C'$; therefore for $P \in \mathcal P_C$ and $Q \in \mathcal P_{C'}$, no vertex of one of these pair classes can distinguish the two vertices of another one of these pair classes, leaving no edges between these pair classes in the digraphs. Section~\ref{sec:lemma5full} of the Appendix contains a full check of items~1--3.
  \end{proof}

\subsection{Well-Definedness}
In this subsection, we argue that the reduction creates a polynomial-size instance of $\textsc{Coloring extension}$ in which $\chi_I$ is a proper coloring of $G_I$ and every color class contains exactly two vertices. Let
\(
  n=|Z|\) and 
  \(m=|\mathcal C|\). Each gadget contains a constant number of vertices and edges, and $O(n+m)$ gadgets are added to $G_I$. 
Thus the construction is polynomial. Now every color class of $\chi_I$ is a pair class and therefore contains
exactly two vertices.  Moreover, none of the gadgets create edges whose endpoints both lie in the same pair class.
Hence, every prescribed color class is an independent set, and
$\chi_I$ is proper.

\subsection{Correctness}\label{sec:correct}
We show that $I$ is satisfiable if and only if $(G_I, \chi_I)$ is a \textsc{yes}-instance of \textsc{Coloring extension}.
\subsubsection{Forward Direction}
Let $t: Z \to \{0,1\}$ be a satisfying assignment of $I$. Define $\tau:\mathcal P_I \to \{0,1\}$ by $\tau(M_z) = t(z)$ for $z \in Z$ and  $\tau(X_{C,i}) = \tau(Y_{C,i}) = t(z_{C,i})$ for every $C \in \mathcal C$ and $i \in \mathbb Z_3$. 
We first argue that every node $M_z$ is a source or a sink in each of
$L_\tau(G_I)$ and $R_\tau(G_I)$. Fix $C\in\mathcal C$ and
$i\in\mathbb Z_3$, and write $z=z_{C,i}$. By
Lemma~\ref{lem:construction-structure}, we have 
\(
G_I[M_z\cup X_{C,i}]=E(M_z,X_{C,i}).
\)
Since
\(
\tau(M_z)=\tau(X_{C,i})=t(z),
\)
Lemma~\ref{lem:basic-gadget-arcs} allows us to conclude that the only arc between these
two nodes, in each of $L_\tau(G_I)$ and $R_\tau(G_I)$, is
\(
M_z\to X_{C,i} \) when 
\(t(z)=1\),
and
\(
X_{C,i}\to M_z \) when \(t(z)=0.
\)
By item~2 of Lemma~\ref{lem:construction-structure}, every edge incident with a vertex of $M_z$ belongs to a gadget of the form $E(M_{z}, X_{C', i'})$ for $C' \in \mathcal C$ and $i' \in \mathbb Z_3$.
Therefore all arcs incident to $M_z$ point away from $M_z$ when
$t(z)=1$ and toward $M_z$ when $t(z)=0$. Thus $M_z$ is respectively a
source or a sink and cannot lie on a directed cycle.   Now let 
\(
  \mathcal M=\{M_z:z\in Z\}.
\)
For each clause $C$, let
\(
  \tau_C=\tau|_{\mathcal P_C}.
\)
By Lemma~\ref{lem:construction-structure}, we have 
\(
  L_\tau(G_I)-\mathcal M
  =
  \bigsqcup_{C\in\mathcal C}
  L_{\tau_C}(N(\mathcal P_C))
\)
and
\(
  R_\tau(G_I)-\mathcal M
  =
  \bigsqcup_{C\in\mathcal C}
  R_{\tau_C}(N(\mathcal P_C)).
\) 

Fix a clause
\(
  C=(z_{C,0},z_{C,1},z_{C,2})\).
For every $i\in\mathbb Z_3$,
\(
  \tau_C(X_{C,i})
  =
  \tau_C(Y_{C,i})
  =
  t(z_{C,i}).
\)
Because $t$ satisfies $I$, 
\(
  \tau_C(X_{C,0})\), \(
  \tau_C(X_{C,1})\), and \(
  \tau_C(X_{C,2})
\)
are not all equal. By Lemma~\ref{lem:naegadget}, both
\(
  L_{\tau_C}(N(\mathcal P_C)) \) and 
  \(R_{\tau_C}(N(\mathcal P_C))
\)
are acyclic. This holds for every clause.  Hence for both $L_\tau(G_I)$ and  $R_\tau(G_I)$, when we delete all nodes of the form $M_z$ for some $z \in Z$ from them, we are left with  disjoint unions
of acyclic digraphs, and thus the result after performing said deletions is acyclic.  Reintroducing the
$M_z$ cannot create a directed cycle, because each such node is
a source or a sink.  Thus $L_\tau(G_I)$ and $R_\tau(G_I)$ are both acyclic. Lemma~\ref{lem:respecttau} therefore gives a letter representation
of $G_I$ respecting $\chi_I$ whose total order realizes $\tau$.
Hence, $(G_I,\chi_I)$ is a \textsc{yes}-instance of
\textsc{Coloring Extension}.

\subsubsection{Backward Direction}

Suppose that $(G_I,\chi_I)$ is a \textsc{yes}-instance of
\textsc{Coloring Extension}.  We can take the letter function of any $\chi_I$-respecting letter representation of $G_I$ to be $\chi_I$ itself, by the appropriate discussion in Section~\ref{subsec:color}. Accordingly, let
\(
  (\prec,\chi_I,D)
\)
be a letter representation of $G_I$. Define
\(
  \tau:\mathcal P_I\to\{0,1\}
\)
by
\(
  \tau(P)=1 \iff
  p^0\prec p^1
\)
for every pair class
\(
  P=\{p^0,p^1\}\in\mathcal P_I.
\)
Thus $\prec$ is consistent with the orientations prescribed by  $\tau$. By Lemma~\ref{lem:respecttau}, both $L_\tau(G_I)$ and $R_\tau(G_I)$ are acyclic. Define
\(
  t:Z\to\{0,1\}
\)
by
\(
  t(z)=\tau(M_z).
\) Fix $C\in\mathcal C$ and $i\in\mathbb Z_3$. By
Lemma~\ref{lem:construction-structure},
\(
G_I[M_{z_{C,i}}\cup X_{C,i}]
=
E(M_{z_{C,i}},X_{C,i}).
\)
If
$\tau(M_{z_{C,i}})\neq\tau(X_{C,i})$, then
Lemma~\ref{lem:basic-gadget-arcs} implies the existence of a directed $2$-cycle in one of
$L_\tau(G_I)$ and $R_\tau(G_I)$, contradicting their acyclicity.
Therefore
\begin{equation}
\label{eq:occurrence-value}
\tau(X_{C,i})
=
\tau(M_{z_{C,i}})
=
t(z_{C,i})
\qquad
(C\in\mathcal C,\ i\in\mathbb Z_3).
\end{equation}
Now fix a clause $C\in\mathcal C$, and let
\(
\tau_C=\tau|_{\mathcal P_C}.
\)
By item~3 of Lemma~\ref{lem:construction-structure},
$L_{\tau_C}(N(\mathcal P_C))$ and
$R_{\tau_C}(N(\mathcal P_C))$ are induced subdigraphs of
$L_\tau(G_I)-\mathcal M$ and $R_\tau(G_I)-\mathcal M$, respectively.
They are thus acyclic, being subdigraphs of acyclic digraphs. Lemma~\ref{lem:naegadget} gives us that 
\(
\tau_C(X_{C,i})=\tau_C(Y_{C,i}) \) for all \(i\in\mathbb Z_3\), as well as  that
\(
\tau_C(X_{C,0}),
\tau_C(X_{C,1})\) and \(
\tau_C(X_{C,2})\)
are not all equal. By~\eqref{eq:occurrence-value}, 
\(
  t(z_{C,0}) \),
 \(t(z_{C,1}) \), and
  \(t(z_{C,2})\)
are not all equal.  Hence, $t$ satisfies $C$. Because $C$ was arbitrary, $t$ satisfies every clause of $I$.
Therefore $I$ is satisfiable.
\section{Reduction from Specialized Coloring-Extension to Lettericity}\label{sec:colortoletter}

\begin{theorem}\label{thm:colortoletter}
Let $G$ be a graph, and let $\chi : V(G) \to [k]$ be a surjective proper coloring of $G$, with $k \ge 1$. There is a polynomial-time construction which produces a graph $H$ and integer $K = 2k$ such that $G$ has a letter representation respecting $\chi$ if and only if $\ell(H) \le K$.
\end{theorem}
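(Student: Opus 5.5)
The plan is to build $H$ from $G$ by attaching, for each color $c\in[k]$, a small gadget that does two jobs. First, it forces every letter representation of $H$ with at most $2k$ letters to spend exactly two letters on color $c$. Second, it forces the vertices of $\chi^{-1}(c)$ to share one of those letters, with the other letter spent on gadget vertices. Concretely, for each $c$ I would add a set $T_c$ of new vertices, with $|T_c|$ a small constant; the count $|V(H)|=6K$ quoted in the abstract suggests four new vertices per two-vertex color class. The adjacencies of $T_c$ are chosen so that:
\begin{itemize}
\item a vertex of $T_c$ is adjacent to all of $\chi^{-1}(c)$ and to none of $V(G)\setminus\chi^{-1}(c)$;
\item $T_c$ is split into two parts with prescribed, mutually different adjacency to the gadgets $T_{c'}$ for $c'\neq c$, so that the $T_c$'s pairwise distinguish each other's vertices.
\end{itemize}
All the analysis goes through the Distinguisher Lemma (Lemma~\ref{lem:orderofdistinguishers}): two vertices can share a letter only if all their distinguishers lie strictly between them in $\prec$.

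For the forward direction, take a letter representation $(\prec,\chi,D)$ of $G$, which exists by Section~\ref{subsec:color}. I would give $T_c$ one fresh letter $\beta_c$ and keep $\chi$ on $V(G)$, for $2k$ letters in total. I would then place all gadget vertices at one end of $\prec$ and choose their internal order so that condition~\eqref{eq:orderofdistinguishers} holds. Since $\chi$ is proper, a gadget vertex in $T_c$ sees $V(G)$ only through the color class $\chi^{-1}(c)$. Two vertices $u,v\in V(G)$ of the same color are therefore either both adjacent or both nonadjacent to each gadget vertex. Hence $\Delta_H(u,v)=\Delta_G(u,v)$, and the old intervals still work. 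For two gadget vertices sharing $\beta_c$, the distinguishers are gadget vertices of other classes together with possibly vertices of $G$. The intra-gadget order has to be designed so that these all fall between them; this is a finite check that I would do once per gadget type.

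For the backward direction, the first step is a lower bound $\ell(H)\ge 2k$ with rigid structure. I would exhibit $2k$ vertices, two per gadget, that pairwise cannot share a letter, because each such pair has distinguishers forced onto both sides in an inconsistent way. Whatever structure of the $T_c$ this uses, the same argument must also show that no vertex of $V(G)$ can reuse a letter of an extremal gadget vertex. With exactly $2k$ letters, every vertex of $G$ then shares a letter with some gadget vertex, and by the neighborhood pattern that gadget must be one of the $T_c$ with $c=\chi(v)$. Mapping color $c$ to that letter yields pairwise distinct letters $\beta_c$, and restricting the representation to $V(G)$ gives a letter representation of $G$ respecting $\chi$.

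The main obstacle is this rigidity step: designing $T_c$ so that the pairwise-incompatibility argument genuinely forces each vertex of $\chi^{-1}(c)$ onto a letter of $T_c$, while remaining compatible with the forward construction. My first attempt would be to make $T_c$ two disjoint induced edges whose endpoints cross-attach to $\chi^{-1}(c)$ asymmetrically. I would then case-check, via Lemma~\ref{lem:orderofdistinguishers}, every possible letter sharing between pairs, adjusting the gadget until both directions go through.
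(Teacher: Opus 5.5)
\textbf{There is a genuine gap: there is no construction, and the properties you list for the gadget contradict each other.}

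You leave $T_c$ to be ``adjusted until both directions go through'', and the backward direction is only a list of properties the gadget must have. Those properties are inconsistent with each other and with your forward direction:
\begin{itemize}
\item In the forward direction every vertex of $T_c$ gets the single letter $\beta_c$. Yet the backward direction asks for two vertices in each $T_c$ that can never share a letter.
\item If $2k$ gadget vertices receive $2k$ distinct letters and no vertex of $V(G)$ may reuse any of them, no letter is left for $V(G)\neq\emptyset$.
\item Your next claim, that every $v\in V(G)$ shares a letter with a vertex of $T_{\chi(v)}$, contradicts that bullet. It also contradicts your own forward representation, where the letters on $V(G)$ and on the gadgets are disjoint.
\end{itemize}

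Lemma~\ref{lem:orderofdistinguishers} never excludes a shared letter by a ``neighborhood pattern'' alone. It only pushes distinguishers into the interval, so sharing is ruled out only by a crossing: two same-letter pairs, each of which distinguishes the other. Your adjacency rule blocks the crossing you need. Every vertex of $T_c$ sees exactly $\chi^{-1}(c)$ inside $V(G)$, so no vertex of $G$ distinguishes two vertices of $T_c$. Hence a same-letter pair inside a gadget can never cross a pair of differently colored vertices of $G$, and that is exactly what keeps different color classes on different letters.

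There is a further problem: vertices of $G$ are not blown up. So nothing forces any letter to be used twice, and you cannot read off diagonal decoder entries $(\gamma,\gamma)$ to separate the letters on $V(G)$ from those on the gadgets.

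\textbf{What the paper does instead.} It has three ingredients.
\begin{enumerate}
\item \emph{Blow-up and pigeonhole.} Each $v$ becomes an independent set $X_v$ of $N=K+1$ twins. Each color $i$ gets a clique $M_i=M_i^0\cup M_i^1$ with $|M_i^b|=N$, and the $M_i$ are pairwise anticomplete. Pigeonhole gives representatives $x_v\in X_v$, $p_i\in M_i^0$, $q_i\in M_i^1$ whose letters repeat inside their sets. Then $(\lambda(x_v),\lambda(x_v))\notin D$ while $(\lambda(p_i),\lambda(p_i)),(\lambda(q_i),\lambda(q_i))\in D$. This separates the $t$ letters on the $x_v$ from the marker letters, and makes the sets $\{\lambda(p_i),\lambda(q_i)\}$ pairwise disjoint over $i$.
\item \emph{A crossing pattern}, rather than ``adjacent exactly to its own class''. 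With $C_i=\chi^{-1}(i)$, for $v\in C_i$ the set $X_v$ is complete to $M_i^1$ and to $M_j^0$ for $j\neq i$, and anticomplete to the other halves. So if $\lambda(p_i)=\lambda(q_i)$, then $x_v$ ($v\in C_i$) and $x_w$ ($w\notin C_i$) cannot share a letter, since each pair would have to lie strictly inside the other's interval (Lemma~\ref{lem:checkerboard}).
\item \emph{A count.} The marker representatives use at most $2k-t$ letters, so at least $t$ colors have $\lambda(p_i)=\lambda(q_i)$. The crossing lemma then forces $t=k$, with each color class on its own letter.
\end{enumerate}
Because every $X_v$ distinguishes $M_i^0$ from $M_i^1$, the forward direction places all $M_i^0$ before and all $M_i^1$ after the $X_v$, not all gadget vertices at one end.

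Finally, the $6K$ in the abstract does not point to a constant-size gadget. It comes from the ETH argument, which trims each blown-up set to three twins, giving $3(|V(G)|+2k)$ vertices with $|V(G)|=2k$. Four extra vertices per class would give $6k=3K$.
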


\noindent
The rest of this section provides a proof of Theorem~\ref{thm:colortoletter}.

\subsection{Construction of $H$}\label{sec:constructH}
Set $K = 2k$ and $N = K + 1 = 2k+1$. For each $i \in [k]$ let $C_i = \chi^{-1}(i)$. Due to the surjectivity and properness of $\chi$, each $C_i$ is a non-empty independent set.

For every vertex $v \in V(G)$, the reduction creates an independent set $X_v$ of size $N$. These $X_v$ will come to have representatives in them induce a subgraph in $H$ isomorphic to $G$ by the following rule: for $u \neq v$, if $uv \in E(G)$, create edges to make $X_u$ complete to $X_v$, and anticomplete otherwise.

For every $i\in[k]$, create two sets of vertices $M_i^0$ and $M_i^1$,
each of size $N$, and make
$M_i=M_i^0\cup M_i^1$ a clique of size $2N$. For distinct $i,j\in[k]$, make $M_i$ and $M_j$
anticomplete.

If $v\in C_i$, make $X_v$ complete to $M_i^1$ and anticomplete to
$M_i^0$. For every $j\in[k]\setminus\{i\}$, make $X_v$ complete to
$M_j^0$ and anticomplete to $M_j^1$. Thus $X_v$ is complete to exactly
one of the two halves of every $M_j$, and the index of the color class of $v$ -- which index is $i$ -- is the unique $i$ for which  $X_v$ is complete to the $1$-half of $M_i$ rather than the $0$-half thereof.

$H$ as constructed thus has $N(|V(G)| + 2k)$ vertices, and is thus of polynomial size because we have $k \le |V(G)|$ due to the non-emptiness of the $C_i$.

Figure~\ref{fig:second-reduction-example} provides an illustration of $H$ on an example source instance.

\begin{figure}[ht]
  \centering
  \begin{tikzpicture}[
      x=1cm,
      y=1cm,
      scale=0.84,
      transform shape,
      sourcevertex/.style={
        circle,
        draw,
        minimum size=7mm,
        inner sep=0pt,
        font=\small
      },
      classone/.style={sourcevertex,fill=blue!15},
      classtwo/.style={sourcevertex,fill=orange!18},
      setbox/.style={
        draw,
        rounded corners=3pt,
        minimum width=2.35cm,
        minimum height=1.10cm,
        align=center,
        inner sep=3pt,
        font=\small
      },
      cloudone/.style={setbox,fill=blue!12},
      cloudtwo/.style={setbox,fill=orange!14},
      markerzero/.style={setbox,fill=gray!13},
      markerone/.style={setbox,fill=green!12},
      markerblock/.style={
        draw,
        rounded corners=5pt,
        line width=0.9pt,
        inner sep=5pt
      },
      complete/.style={
        line width=1.7pt,
        line cap=round
      },
      sourceedge/.style={
        line width=0.9pt
      },
      note/.style={
        font=\scriptsize,
        align=center
      }
    ]


    \node[font=\bfseries] at (6,5.45)
      {Example input $(G,\mathcal \chi)$};

    \node[classone] (u) at (4.7,4.65) {$u$};
    \node[classtwo] (v) at (6.0,4.65) {$v$};
    \node[classone] (w) at (7.3,4.65) {$w$};

    \draw[sourceedge] (u)--(v)--(w);

    \node[note] at (6,3.95)
      {$G=P_3$, \quad
       $C_1=\{u,w\}$, \quad
       $C_2=\{v\}$, \quad
       $k=2$, \quad
       $N=2k+1=5$};

    \node[font=\bfseries] at (6,3.30)
      {Constructed graph $H$};

    %

    \node[markerone] (M11) at (1.45,1.90)
      {$M_1^1$\\[-1mm]
       $\circ\ \circ\ \circ\ \circ\ \circ$};

    \node[markerzero] (M10) at (1.45,-0.05)
      {$M_1^0$\\[-1mm]
       $\circ\ \circ\ \circ\ \circ\ \circ$};

    \node[markerzero] (M20) at (10.55,1.90)
      {$M_2^0$\\[-1mm]
       $\circ\ \circ\ \circ\ \circ\ \circ$};

    \node[markerone] (M21) at (10.55,-0.05)
      {$M_2^1$\\[-1mm]
       $\circ\ \circ\ \circ\ \circ\ \circ$};


    \begin{scope}[on background layer]
      \node[
        markerblock,
        fit=(M11)(M10),
        label={[font=\small]above:$M_1$ is a clique}
      ] (M1box) {};

      \node[
        markerblock,
        fit=(M20)(M21),
        label={[font=\small]above:$M_2$ is a clique}
      ] (M2box) {};
    \end{scope}


    \node[cloudone] (Xu) at (6,1.90)
      {$X_u$\\[-1mm]
       $\circ\ \circ\ \circ\ \circ\ \circ$};

    \node[cloudtwo] (Xv) at (6,-0.05)
      {$X_v$\\[-1mm]
       $\circ\ \circ\ \circ\ \circ\ \circ$};

    \node[cloudone] (Xw) at (6,-2.00)
      {$X_w$\\[-1mm]
       $\circ\ \circ\ \circ\ \circ\ \circ$};

    %

    \begin{scope}[on background layer]

      \draw[complete]
        (Xu.south)--(Xv.north);

      \draw[complete]
        (Xv.south)--(Xw.north);


      \draw[complete]
        (M11.east)--(Xu.west);

      \draw[complete]
        (M20.west)--(Xu.east);

      \draw[complete]
        (M11.south west)
        to[out=-110,in=180,looseness=1.8]
        (Xw.west);

      \draw[complete]
        (M20.south east)
        to[out=-70,in=0,looseness=1.8]
        (Xw.east);


      \draw[complete]
        (M10.east)--(Xv.west);

      \draw[complete]
        (M21.west)--(Xv.east);

    \end{scope}

  \end{tikzpicture}

  \caption{
    A schematic of the graph $H$ produced by the construction in Section~\ref{sec:constructH}  from
    $G=P_3$ with color classes
    $C_1=\{u,w\}$ and $C_2=\{v\}$.
    Each rounded rectangle represents a set of $N=5$ vertices.
    A thick segment between two sets denotes that the sets are complete to each other. 
    The absence of such  a segment denotes that they are anticomplete, except for that the
     halves of $M_i$ inside a common outer box together form the clique
    $M_i=M_i^0\cup M_i^1$.
    Each $X_z$ is an independent set. The three $X_z$ are connected as a path, in order to respect the connectivity of the source graph $P_3$. There are two $M_i$, one for each of the two color classes. Each of the $X_z$ is connected to exactly one of the halves of each $M_i$ -- and to which half of $M_i$  $X_z$ is complete depends on whether or not the color class of $z$ is $i$. For example, the color class of $u$ is $C_1$, so $X_u$ is complete to the $1$-half of $M_1$, but the $0$-half of $M_2$. 
  }
  
  \label{fig:second-reduction-example}
\end{figure}

\subsection{Forward Direction}\label{sec:forward-colortoletter}

In this subsection, we prove that if $G$ has a letter representation respecting $\chi$, then $\ell(H) \le K$. Let $(\prec_G, \lambda, D_0)$ be such a representation of $G$. Because the representation respects $\chi$, for every $C_i$ each of its vertices is mapped to the same letter under $\lambda$, which we write as $\beta_i$. Because $C_i$ is an independent set in~$G$, if $|C_i| \ge 2$ then $(\beta_i, \beta_i) \notin D_0$. If $|C_i| = 1$, then we can still assume, w.l.o.g., that $(\beta_i, \beta_i) \notin D_0$, because no vertices outside of $C_i$ are mapped to $\beta_i$ under $\lambda$, so the decoder stipulates the exact same set of conditions as to which edges must be present in $G$ regardless of whether $(\beta_i, \beta_i)$ is in the decoder or not. Thus we have \begin{equation}\label{eq:diagonalzero} (\beta_i, \beta_i) \notin D_0 \qquad (\text{for } i \in [k]).\end{equation}

We now construct a letter representation of $H$ whose alphabet consists
of the $2k$ pairwise distinct letters $\alpha_1,\ldots,\alpha_k,\beta_1,\ldots,\beta_k$.

Enumerate the vertices of $G$ such that $v_1\prec_G v_2\prec_G\cdots\prec_G v_n$. For the order on the vertices of $H$ which we are constructing to be part of a letter representation of $H$, construct the order to be such that in \begin{equation} \label{eq:orderm} M_1^0,M_2^0,\ldots,M_k^0,
X_{v_1},X_{v_2},\ldots,X_{v_n},
M_1^1,M_2^1,\ldots,M_k^1,\end{equation} each of the vertices in one set occurs before all of the vertices in all sets enumerated after it. It is immaterial how the ordering orders the vertices within each of the sets themselves. 

For the letter function, map all $2N$ vertices of
$M_i=M_i^0\cup M_i^1$ to $\alpha_i$, and map all $N$ vertices of $X_v$
to $\beta_i$ whenever $v\in C_i$.

For $i,j\in[k]$, define the decoder $\widehat D$ according to the
following rules:
\begin{align}
  (\beta_i,\beta_j)\in\widehat D
  &\quad\Longleftrightarrow\quad
  (\beta_i,\beta_j)\in D_0,
  \label{eq:decoder-data-data}\\
  (\alpha_i,\alpha_j)\in\widehat D
  &\quad\Longleftrightarrow\quad
  i=j,
  \label{eq:decoder-marker-marker}\\
  (\alpha_j,\beta_i)\in\widehat D
  &\quad\Longleftrightarrow\quad
  j\ne i,
  \label{eq:decoder-left-marker}\\
  (\beta_i,\alpha_j)\in\widehat D
  &\quad\Longleftrightarrow\quad
  j=i.
  \label{eq:decoder-right-marker}
\end{align}

These rules cover every pair of letters in the alphabet.

We now verify that the letter representation constructed in this subsection is a valid letter representation of $H$. We do a case analysis on what any pair of distinct vertices in $V(H)$ could be, in each case showing that $\widehat D$ says that there should be an edge between them exactly when there is such an edge in $E(H)$.

\begin{enumerate}
\item For two vertices in $X_v$, they are assigned $\beta_i$ for $i$ such that $v \in C_i$. By \eqref{eq:diagonalzero}, $(\beta_i, \beta_i) \notin D_0$, and thus by \eqref{eq:decoder-data-data}, $(\beta_i, \beta_i) \notin \widehat D$, according with how $X_v$ is an independent set in $H$. 
\item Consider one vertex in $X_{v_p}$ and one vertex in $X_{v_q}$. Assume 
$p<q$ (which is equivalent to $v_p \prec_G v_q$), so \eqref{eq:orderm} gives that the vertex in $X_{v_p}$ precedes the one in $X_{v_q}$ in the ordering for the constructed letter representation. By construction, there is an edge between these two vertices if and only if $v_p$ and $v_q$ are adjacent in $G$, and $v_p$ and $v_q$ are adjacent in $G$ if and only if $(\beta_i, \beta_j) \in D_0$ for $i,j$ such that $v_p \in C_i$ and $v_q \in C_j$. This last condition  is equivalent to $(\beta_i, \beta_j) \in \widehat D$. $\beta_i, \beta_j$ are exactly the two letters under the constructed letter function of the vertices we are considering here, and thus we have that they are adjacent if and only if the tuple containing the letter for the lesser one in the new order followed by the letter for the greater one is in $\widehat D$.
\item
A pair of vertices in $M_i$ are necessarily adjacent in $H$ because
$M_i$ is a clique. They are represented by $\alpha_i$, and
$(\alpha_i,\alpha_i)\in\widehat D$ by
\eqref{eq:decoder-marker-marker}.

\item
Two vertices, one of which is in $M_i$ and the other of which is in
$M_j$ for $i\ne j$, are necessarily nonadjacent, in agreement with
\eqref{eq:decoder-marker-marker}.

\item
For a vertex in $M_j^0$ and a vertex in $X_v$, where $v\in C_i$:
every vertex of $M_j^0$ precedes every vertex of $X_v$. By construction,
$M_j^0$ is complete to $X_v$ exactly when $j\ne i$, which is exactly
the condition in \eqref{eq:decoder-left-marker}.

\item
For a vertex in $X_v$ and a vertex in $M_j^1$, where $v\in C_i$:
every vertex of $X_v$ precedes every vertex of $M_j^1$. By construction,
$X_v$ is complete to $M_j^1$ exactly when $j=i$, which is exactly the
condition in \eqref{eq:decoder-right-marker}.
\end{enumerate}
\subsection{Backward Direction}
Assume that $H$ has a letter representation $(\prec, \lambda: V(H) \to \Sigma, D)$ such that $|\Sigma| \le K$. Each set $X_v$ or $M_i^b$ consists of $N=K+1$ elements, so by the pigeonhole principle, for each of these sets there exists a letter in $\Sigma$ which has two distinct preimages in it. For each $v \in V(G)$, select some $x_v \in X_v$ such that $\lambda(x_v)$ has two distinct preimages in $X_v$. For each $i \in [k]$, select some $p_i \in M_i^0$ and $q_i \in M_i^1$ such that $\lambda(p_i)$ has two distinct preimages in $M_i^0$ and $\lambda(q_i)$ has two distinct preimages in $M_i^1$.

Let $t = |\{ \lambda(x_v) : v \in V(G) \}|$. Because $X_v$ is an independent set, we have $(\lambda(x_v), \lambda(x_v)) \notin D$; whereas because $M_i$ is a clique, $(\lambda(p_i), \lambda(p_i)), (\lambda(q_i), \lambda(q_i)) \in D$; therefore for all $v \in V(G)$, for all $i \in [k] $ $\lambda(x_v) \neq \lambda(p_i)$ and $\lambda(x_v) \neq \lambda(q_i)$. As a result, the $p_i$ and $q_i$ must be mapped (under $\lambda$) to at most $|\Sigma| - t \le 2k - t$ letters. Now  each of the $\lambda(p_i)$ must be distinct because $M_i$ and $M_j$ are anticomplete for $i \neq j$; their anticompleteness means that if $\lambda(p_i)$ has a preimage in both $M_i$ and $M_j$, $(\lambda(p_i),\lambda(p_i)) \notin D$, whereas we already mentioned how $(\lambda(p_i),\lambda(p_i)) \in D$. The same reasoning in this anticompleteness argument establishes that more generally, for $ i\neq j$ $\{\lambda(p_i), \lambda(q_i)\} \cap \{\lambda(p_j), \lambda(q_j)\} = \emptyset$. Given this and that there are $2k$ of the $p_i$ and $q_i$, it must be that $|\{ i \in [k] \mid \lambda(p_i) = \lambda(q_i)\}| \ge t$ so that the $p_i$ and $q_i$ can be mapped to at most $2k- t$ letters. Let $S = \{ i \in [k] \mid \lambda(p_i) = \lambda(q_i)\}$. 

\begin{lemma}\label{lem:checkerboard}
    
For each $i \in S$, if $v\in C_i$, $w\in C_j$, and $i\ne j$, then $\lambda(x_v) \neq \lambda(x_w)$. 
\end{lemma}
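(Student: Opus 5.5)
Since $i\in S$, we have $\lambda(p_i)=\lambda(q_i)=:\gamma$ with $p_i\in M_i^0$ and $q_i\in M_i^1$. Write $a=\lambda(x_v)$, and suppose for contradiction that $\lambda(x_w)=a$. The two letters $a$ and $\gamma$ are distinct, because $(a,a)\notin D$ while $(\gamma,\gamma)\in D$, as shown just before the lemma. Every vertex of $X_v$ and of $X_w$ is nonadjacent to itself-class partners, but what matters here is how $x_v$ and $x_w$ see $p_i$ and $q_i$. By construction, $X_v$ is complete to $M_i^1$ and anticomplete to $M_i^0$ since $v\in C_i$. On the other hand $X_w$ is complete to $M_i^0$ and anticomplete to $M_i^1$ since $j\neq i$. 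Hence $x_v q_i\in E(H)$, $x_v p_i\notin E(H)$, $x_w p_i\in E(H)$ and $x_w q_i\notin E(H)$.

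I would use the decoder directly. Whether a pair of vertices with letters $a$ and $\gamma$ is adjacent depends only on which of the two comes first in $\prec$, through the two bits $(a,\gamma)\in D$ and $(\gamma,a)\in D$. Consider $x_v$ and its two neighbours-in-question $p_i,q_i$, which both carry letter $\gamma$. Since $x_v$ is adjacent to $q_i$ but not to $p_i$, the two pairs must be read off different bits. So exactly one of $p_i,q_i$ precedes $x_v$, and the adjacency pattern fixes which bit is which.

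\textbf{Case 1:} $p_i\prec x_v\prec q_i$. Then $(\gamma,a)\notin D$ and $(a,\gamma)\in D$. Now examine $x_w$, which also has letter $a$. If $x_w$ precedes both $p_i$ and $q_i$, it would be adjacent to both. If it follows both, it would be nonadjacent to both. Neither matches, since $x_w$ sees exactly one of them. If $p_i\prec x_w\prec q_i$, then $x_w p_i\notin E$, a contradiction. If $q_i\prec x_w\prec p_i$, then $x_w q_i\notin E$ and $x_w p_i\in E$. This is consistent with the letters, but it contradicts $p_i\prec q_i$. Hence this case is impossible.

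\textbf{Case 2:} $q_i\prec x_v\prec p_i$. Then $(\gamma,a)\in D$ and $(a,\gamma)\notin D$. The same enumeration for $x_w$ forces $p_i\prec x_w\prec q_i$, which contradicts $q_i\prec p_i$.

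An alternative, cleaner framing is to apply the distinguisher criterion of Lemma~\ref{lem:orderofdistinguishers}, with $\lambda$ playing the role of $\chi$. Then $p_i$ distinguishes $x_v,x_w$, so $p_i\in I_\prec(x_v,x_w)$, and likewise $q_i\in I_\prec(x_v,x_w)$. Symmetrically, $x_v$ distinguishes $p_i,q_i$, so $x_v\in I_\prec(p_i,q_i)$, and the same holds for $x_w$. It is impossible for both $p_i,q_i$ to lie strictly between $x_v,x_w$ while both $x_v,x_w$ lie strictly between $p_i,q_i$, since the two intervals would have to strictly contain each other. I would write the proof this way. The only care needed is checking the four adjacency facts from the construction and confirming the distinctness $a\neq\gamma$, so that the vertices involved are four distinct vertices; no step is a real obstacle.
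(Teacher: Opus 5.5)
Your proof is correct, and the distinguisher framing you settle on is exactly the paper's argument: $p_i,q_i\in\Delta_H(x_v,x_w)$ and $x_v,x_w\in\Delta_H(p_i,q_i)$, so the forward direction of Lemma~\ref{lem:orderofdistinguishers} would force each pair to lie strictly inside the other's interval, which is impossible. Your preliminary decoder case analysis is also valid but just inlines that forward direction; also, the four vertices are distinct because they lie in the pairwise disjoint sets $X_v,X_w,M_i^0,M_i^1$ (with $v\neq w$ since $i\neq j$), not because $a\neq\gamma$.
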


\begin{proof}

Let $i \in S$ be arbitrary and suppose toward a contradiction that $\lambda(x_v)=\lambda(x_w)$, where
$v\in C_i$, $w\in C_j$, and $j\ne i$.  By construction, $x_vq_i, x_wp_i \in E(H)$ whereas $x_vp_i, x_wq_i \notin E(H)$, so $p_i, q_i \in \Delta_H(x_v,x_w)$ and  $x_v, x_w \in \Delta_H(p_i,q_i)$. Thus the forward direction of Lemma~\ref{lem:orderofdistinguishers} implies that $\{p_i,q_i\} \subseteq I_\prec(x_v,x_w)$ and $\{x_v,x_w\} \subseteq I_\prec(p_i,q_i)$, which is a contradiction. 
\end{proof}

Now let $J \subseteq S$ with $|J| = t$, which is possible due to how we established $|S| \ge t$. For each $i \in J$, select some $v_i \in C_i$ (which is possible by the surjectivity of $\chi$) and define $\beta_i = \lambda(x_{v_i})$. The $\beta_i$ are pairwise distinct by Lemma~\ref{lem:checkerboard}. Now $\{\beta_i:i\in J\} \subseteq \{\lambda(x_v):v\in V(G)\} $, but also these two sets have the same cardinality of $t$ so we have $\{\lambda(x_v):v\in V(G)\} = \{\beta_i:i\in J\}$. Consequently, for any $v \in V(G)$, $\lambda(x_v)$ occurs as one of the $\beta_i$; and it occurs as exactly one of them because of the aforementioned pairwise distinctness of the $\beta_i$. 

Now if there exists $j \in [k] \setminus J$, then because $C_j$ is nonempty there exists $w \in C_j$; by the previous paragraph there is $i \in J$ for which $\lambda(x_w) = \beta_i = \lambda(x_{v_i})$, which contradicts Lemma~\ref{lem:checkerboard} because the fact that $i \in J$ but $j \notin J$ implies that the color classes of  $v_i$ and $w$ are different, and thus that $x_w \neq x_{v_i}$. Therefore $J = [k]$ and $t = k$.

By the construction  of $H$, the vertices $x_v$ induce a copy of $G$ in $H$. Restricting the letter representation $(\prec, \lambda, D)$ of $H$ to be with respect to the elements of $\{ x_v : v \in V(G) \}$ only and relabeling $x_v$ as $v$ in the letter representation gives a $\chi$-respecting letter representation of $G$: the $\beta_i$ are pairwise distinct and every vertex of $V(G)$ in color class $C_i$ shares the letter $\beta_i$ under this letter representation, because if $w \in C_i$ then $\lambda(x_w) = \beta_j$ for $j \in [k]$; if $j \neq i$, then
$\lambda(x_w)=\beta_j=\lambda(x_{v_j})$,
contradicting Lemma~\ref{lem:checkerboard} as
$j\in J\subseteq S$ and $w,v_j$ belong to different
color classes.

\section{Complexity Results for \textsc{Lettericity} and Other Problems}
\subsection{\textsc{Lettericity Complexity Results}}
Taken together, Sections~\ref{sec:sattocolor} and~\ref{sec:colortoletter} imply that \textsc{Lettericity} is \textsf{NP}-hard. They also help show that \textsc{Lettericity} is \textsf{NP}-complete.

\begin{theorem}\label{thm:np-complete}
\textsc{Lettericity} is \textsf{NP}-complete.
\end{theorem}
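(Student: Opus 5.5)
The proof splits into membership in \textsf{NP} and \textsf{NP}-hardness, and hardness follows by composing the two reductions already established.

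\emph{Membership in \textsf{NP}.} Given $(G,k)$, I may assume $k \le |V(G)|$, since every graph trivially has lettericity at most $|V(G)|$ (give each vertex its own letter and put $(\lambda(u),\lambda(v))$ in $D$ exactly when $uv\in E(G)$). If $k > |V(G)|$, answer \textsc{yes} immediately. Otherwise the certificate is a letter representation $(\prec,\lambda,D)$ with $|\Sigma|\le k$. It consists of a total order of $V(G)$, a map $V(G)\to\Sigma$, and a subset of $\Sigma^2$, so its size is $O(|V(G)|^2)$. To verify it, check for every pair $u\prec v$ that $uv\in E(G)\iff(\lambda(u),\lambda(v))\in D$; this takes polynomial time. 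Alternatively, the certificate can be just $(\prec,\lambda)$, checked against condition~\eqref{eq:orderofdistinguishers} using Lemma~\ref{lem:orderofdistinguishers}. One small point is that $k$ may be given in binary, which is why the case $k > |V(G)|$ is handled separately and why the alphabet in the certificate stays polynomially bounded.

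\emph{\textsf{NP}-hardness.} I reduce from \textsc{Monotone-NAE3-SAT}. Given an instance $I$, apply the construction of Section~\ref{sec:reduction} to obtain $(G_I,\chi_I)$. By Theorem~\ref{thm:sattocolor} and the well-definedness discussion, $\chi_I$ is a proper coloring in which every color class has exactly two vertices, and $I$ is satisfiable iff $G_I$ has a letter representation respecting $\chi_I$. The coloring $\chi_I$ is surjective onto $[k]$ with $k=|V(G_I)|/2\ge 1$, after renumbering the colors. In the degenerate case with no clauses, the output is two nonadjacent vertices of the same color, and this coloring is still proper. So the hypotheses of Theorem~\ref{thm:colortoletter} hold. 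Applying that theorem produces $H$ and $K=2k$ such that $\ell(H)\le K$ iff $G_I$ has a $\chi_I$-respecting letter representation, which holds iff $I$ is satisfiable. Both constructions run in polynomial time, so their composition is a polynomial-time many-one reduction to \textsc{Lettericity}.

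The only step that needs care is this interface between the two reductions: checking that the output of Section~\ref{sec:sattocolor} satisfies exactly the hypotheses of Theorem~\ref{thm:colortoletter}, namely properness, surjectivity and $k\ge1$. These follow directly from the statements already proved, so no genuine obstacle remains.
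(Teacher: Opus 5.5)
Your proposal is correct and follows the same route as the paper: \textsf{NP}-hardness by composing the \textsc{Monotone-NAE3-SAT}-to-\textsc{Coloring extension} reduction with the reduction of Theorem~\ref{thm:colortoletter}, and membership in \textsf{NP} by using a letter representation with $|\Sigma|\le k$ as the certificate and checking all pairs $u\prec v$. Your separate handling of $k>|V(G)|$ does the same job as the paper's surjectivity assumption, which bounds $|\Sigma|\le n$, in keeping the certificate polynomial; your check that properness, surjectivity and $k\ge 1$ hold across the two reductions, including the no-clause case, is a detail the paper leaves implicit.
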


\begin{proof}[Proof sketch]
For a  \textsc{yes}-instance $(G,k)$ of \textsc{Lettericity}, a certificate is a letter representation $(\prec, \lambda: V(G) \to \Sigma, D)$ of $G$ for which  $|\Sigma| \le k$. It can be checked in polynomial time whether the certificate is a valid letter representation of~$G$ which satisfies $|\Sigma| \le k$; in particular, verifying that for all $u \prec v$, \(uv \in E(G) \iff (\lambda(u), \lambda(v)) \in D\) can be done by only iterating through $n \choose 2$ items. Section~\ref{sec:full-np-complete} gives a more detailed proof which contains more discussion as to how the encoding of the certificate leads to polynomial-time computation of the relevant verification processes. \end{proof}

\begin{theorem}[\textsc{Lettericity}\label{thm:eth-lowerbound}]
Unless the Exponential Time Hypothesis is incorrect, there cannot exist a deterministic algorithm which, given a graph $G$ on $n$ vertices and a positive integer $k$,
decides whether $\ell(G) \le k$  in time $2^{o(n)}$, even when $n = 6k$.
\end{theorem}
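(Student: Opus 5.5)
The plan is to chain the two reductions, keep every size linear, and use the sparsification-based ETH lower bound for the source problem. First I will record the source bound. Under ETH, \textsc{Monotone-NAE3-SAT} on $|Z|$ variables and $|\mathcal C|$ clauses has no $2^{o(|Z|+|\mathcal C|)}$-time algorithm. This follows from the Sparsification Lemma applied to 3-SAT, composed with the standard linear-size reductions 3-SAT $\to$ NAE-3-SAT $\to$ monotone NAE-3-SAT; I will cite this rather than reprove it.

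Next, the reduction of Section~\ref{sec:sattocolor} yields $G_I$ with $k_G=|Z|+6|\mathcal C|$ pair classes and $2k_G$ vertices. This is linear in $|Z|+|\mathcal C|$, so this step is harmless.

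The obstacle is Section~\ref{sec:colortoletter}. There $H$ has $(2k+1)(|V(G)|+2k)=\Theta(k^2)$ vertices, which would only exclude $2^{o(\sqrt n)}$ algorithms. So I would build a linear-size variant of $H$ specialised to colorings with exactly two vertices per class, and aim for exactly $6$ vertices per letter. The blow-up $N=K+1$ is used only for the pigeonhole step, which supplies representatives $x_v,p_i,q_i$ whose letters repeat. My first attempt is to drop it: take $X_v=\{x_v\}$ as single vertices, and let each $M_i^0$ and $M_i^1$ be a pair, so that $M_i$ is a $4$-clique. This gives $2k+4k=6k$ vertices for $K=2k$ letters, i.e.\ $n=3K$.

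The repeated-letter facts then have to come from the structure rather than from counting. Within the $4$-clique $M_i$, two vertices of $M_i^b$ sharing a letter would have to bracket every distinguisher by Lemma~\ref{lem:orderofdistinguishers}. In addition, the two vertices of a color class of $G$ both see $M_i^1$ and both miss $M_i^0$ identically, so they are non-distinguished by markers. I would reprove the counting argument and Lemma~\ref{lem:checkerboard} in this setting, where the letter classes are now computed exactly rather than via pigeonhole.

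If this tight version fails, I would fall back to a constant blow-up. Each $X_v$ and each $M_i^b$ would get $c$ vertices, and I would show that an alphabet of size $2k$ forces, in all but a bounded number of classes, some letter to repeat inside the block. The exceptional classes would be absorbed by an averaging argument. To reach exactly $n=6K$, I would then pad with a disjoint gadget (for instance a clique-plus-independent-set component) whose lettericity contribution is known exactly and which adds a fixed number of letters per added vertex, adjusting $K$ accordingly.

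With either variant, $n=\Theta(k)=\Theta(|Z|+|\mathcal C|)$. A $2^{o(n)}$ algorithm for \textsc{Lettericity} with $n=6k$ would then decide \textsc{Monotone-NAE3-SAT} in $2^{o(|Z|+|\mathcal C|)}$ time, contradicting ETH. The main risk is the correctness of the backward direction of the linear-size variant of Theorem~\ref{thm:colortoletter}. That is where I would spend the effort: re-verifying that $|S|\ge t$ and Lemma~\ref{lem:checkerboard} still hold without the pigeonhole representatives.
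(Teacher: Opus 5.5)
Your outer chain is fine: an ETH lower bound for \textsc{Monotone-NAE3-SAT} in terms of the number of clauses, the linear-size first reduction, and the observation that Section~\ref{sec:colortoletter} as stated gives $\Theta(k^2)$ vertices. The gap is that you never supply a linear-size second reduction whose correctness is proved, and that is the only new content of this theorem.

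\textbf{The tight variant.} With singleton $X_v$ and two-vertex $M_i^b$, the backward direction of Section~\ref{sec:colortoletter} loses its footing. Each of the following rests on some letter repeating inside a block:
\begin{itemize}
\item $(\lambda(x_v),\lambda(x_v))\notin D$ and $(\lambda(p_i),\lambda(p_i)),(\lambda(q_i),\lambda(q_i))\in D$;
\item hence data letters avoid marker letters, and marker letters of different $i$ are disjoint;
\item hence $|S|\ge t$, which is what makes Lemma~\ref{lem:checkerboard} usable and forces $t=k$.
\end{itemize}
With blocks of size $1$ and $2$ nothing forces such a repetition. Twins may receive different letters, and a letter class may a priori mix data and marker vertices. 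So ``computing the letter classes exactly'' restates what must be proved rather than proving it. This variant also has $6k=3K$ vertices, not $6K$.

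\textbf{The fallback.} It does not repair this. With blocks of constant size $c$ and $2k$ letters, counting forces no repetition inside any block, since a letter can occur once in each of many blocks. So there is no ``all but boundedly many blocks'' statement to average over. Pigeonhole needs $c>K$, which is exactly the quadratic blow-up. The padding step is also unsupported. Lettericity is only subadditive under disjoint union, so you would need a lower bound showing the padding component cannot share letters with $H$ in the no-case.

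\textbf{The missing ingredient.} What you need is a structural fact about twins, not a new reduction. By \cite[Lemma~17]{AleKLZ2026}, deleting a vertex from a twin class of size at least $4$ does not change lettericity. Every $X_v$ and every $M_i^b$ in the original $H$ consists of pairwise twins. So keep $N=2k+1$, where the pigeonhole proof already works, and then shrink each block to exactly $3$ vertices to get $H'$ with $\ell(H')=\ell(H)$; correctness transfers without reproving anything. Since every color class of $\chi_I$ has two vertices, $|V(G)|=2k$, and so $|V(H')|=3(|V(G)|+2k)=12k=6K$ exactly. Here $K=2(|Z|+6|\mathcal C|)=\Theta(|\mathcal C|)$ after discarding unused variables, and the ETH argument then goes through as you describe.
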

\begin{proof}[Proof sketch] We give just a proof sketch here; a full proof for Theorem~\ref{thm:eth-lowerbound} can be found in Section~\ref{sec:full-cor-eth-lowerbound} of the Appendix. Assume the ETH. Under the ETH, there cannot exist a deterministic algorithm which decides \textsc{Monotone-NAE3-SAT} in $2^{o(m)}$ time, where $m$ is the number of clauses in its input instances of \textsc{Monotone-NAE3-SAT}~\cite[Lemma~5.4]{KnoMas2018}.

Let $R$ be the reduction from  \textsc{Monotone-NAE3-SAT} to \textsc{Lettericity} formed by composing the reductions in Sections~\ref{sec:sattocolor} and~\ref{sec:colortoletter}. $R$ can have a simple  postprocessing step added to it such that when it takes as input  a \textsc{Monotone-NAE3-SAT} instance $I$ with $n'$ variables and $m$ clauses it outputs $(H,K)$ for $|V(H)| = 6K$ and it is still a valid Karp-reduction from \textsc{Monotone-NAE3-SAT} to \textsc{Lettericity}. This postprocessing step does not change the parameter~$K$, 
which by construction satisfies $K = \Theta(n' + m)$. We can assume $n' \le 3m$ because unused variables in $I$ can be deleted, giving $K = \Theta(m)$ and thus $|V(H)| = \Theta(m)$. If Theorem~\ref{thm:eth-lowerbound} was false, $\ell(H) \le K$ can be decided in $2^{o(|V(H)|)}$ time, which gives a $2^{o(m)}$ deterministic algorithm to decide \textsc{Monotone-NAE3-SAT} --- a contradiction. \end{proof}


\subsection{Complexity Results for Other Problems}
Along the way toward proving \textsf{NP}-hardness of \textsc{Lettericity}, we helped show the result below. 
\begin{theorem}\label{thm:coloring-extension-npcomplete}
\textsc{Coloring Extension} is \textsf{NP}-complete even when the coloring is proper and its color classes are all of size exactly two.
    \end{theorem}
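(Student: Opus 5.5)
The statement has two halves: \textsf{NP}-hardness and membership in \textsf{NP}. Hardness is already available. Theorem~\ref{thm:sattocolor} gives a polynomial-time reduction from \textsc{Monotone-NAE3-SAT} to exactly the restricted version of \textsc{Coloring extension} in which $\chi$ is proper and every color class has two vertices. The Well-Definedness subsection checks that the output $(G_I,\chi_I)$ always satisfies these restrictions. In the degenerate no-clause case the output is the edgeless two-vertex graph with one color class of size two, which is proper and is trivially a \textsc{yes}-instance. So I will simply cite Theorem~\ref{thm:sattocolor} together with the correctness argument of Section~\ref{sec:correct}, and note that hardness of the restricted problem implies hardness of the general one.

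For membership in \textsf{NP}, the certificate will be a total order $\prec$ of $V(G)$, written as a permutation of the vertices. By the discussion in Section~\ref{subsec:color}, a $\chi$-respecting representation exists if and only if one exists whose letter function is $\chi$ itself, so there is no need to guess $\lambda$. The decoder need not be guessed either: by Lemma~\ref{lem:orderofdistinguishers}, a decoder $D$ making $(\prec,\chi,D)$ a letter representation exists if and only if condition~\eqref{eq:orderofdistinguishers} holds. The verifier therefore does the following.
\begin{enumerate}
\item Check that $\prec$ is a permutation of $V(G)$.
\item For every pair $u\ne v$ with $\chi(u)=\chi(v)$, compute $\Delta_G(u,v)$ and check that each of its vertices lies strictly between $u$ and $v$ in $\prec$.
\end{enumerate}
This takes $O(n^3)$ time. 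Alternatively, the certificate can include $D \subseteq [k]^2$, which has polynomial size, and the verifier checks all $\binom n2$ pairs directly against it.

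Both directions of correctness of the verifier follow immediately from Lemma~\ref{lem:orderofdistinguishers} and the relabeling remark in Section~\ref{subsec:color}. If some $\chi$-respecting representation exists, its order is an accepted certificate. Conversely, an accepted certificate yields a decoder and hence a $\chi$-respecting representation.

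No step is a real obstacle here. The only point needing care is justifying that guessing only the order suffices, that is, that respecting $\chi$ can be replaced by taking $\lambda=\chi$. This is exactly the relabeling $\beta_i\mapsto i$ already given in the paper.
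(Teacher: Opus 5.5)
Your proof is correct. The hardness half matches the paper's: you cite the reduction of Theorem~\ref{thm:sattocolor} together with the well-definedness check that $\chi_I$ is proper with all classes of size two.

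The membership half differs somewhat from the paper's. The paper takes as certificate a full triple $(\prec,\lambda,D)$. It checks that $\lambda$ is constant on each color class and distinct across classes, and relabels $D$ via $\beta_i\mapsto i$. It then runs the \textsf{NP}-verifier for \textsc{Lettericity} from Theorem~\ref{thm:np-complete} on $(\prec,\chi,D_\chi)$. You instead fix $\lambda=\chi$ from the outset and certify with the order alone. Both directions of Lemma~\ref{lem:orderofdistinguishers} let you replace ``some decoder works'' by the purely order-theoretic condition~\eqref{eq:orderofdistinguishers}.

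What each route buys: yours gives a smaller certificate and a self-contained $O(n^3)$ check that does not rely on the \textsc{Lettericity} verifier. It does depend on the backward direction of the distinguisher lemma, which the paper proves anyway. The paper's route checks the definition of a letter representation directly and reuses the verifier it already has. Your stated alternative, guessing $D\subseteq[k]^2$ and checking all $\binom n2$ pairs, is essentially the paper's argument with the guess of $\lambda$ removed.
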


    \begin{proof}

    The reduction in Section~\ref{sec:sattocolor} establishes \textsf{NP}-hardness for the appropriate \textsc{Coloring Extension} variant. Given an instance $(G, \chi: V(G) \to [k])$ of \textsc{Coloring Extension}, it can be verified that a certificate $(\prec, \lambda, D)$ witnesses that $(G, \chi)$ is a \textsc{yes}-instance of \textsc{Coloring Extension} by checking first whether $\lambda$ is constant on each color class of $\chi$ and assigns different letters to different color classes and then using an \textsf{NP}-verifier for \textsc{Lettericity} (the existence of which is given by Theorem~\ref{thm:np-complete}) to check whether $(\prec, \chi, D_\chi)$ is a  valid certificate for the instance $(G,k)$ of \textsc{Lettericity}, where $D_\chi$ is the result of relabeling the entries of $D$ to accord with $\chi$ (in that if there exist pairwise distinct letters $\beta_1, \ldots, \beta_k \in \Sigma$ such that $\lambda(v) = \beta_{\chi(v)}$ for all $v\in V(G)$, then $D_\chi$ is the result of relabeling the entries of $D$ according to $\beta_i \to i$).  \end{proof}

The same paper which raised as an open question the complexity of \textsc{Coloring Extension}~(\cite{GroMorSac2026}) also asked what the complexity of \textsc{Word Extension} was.  \textsc{Word Extension} asks, for a given $n$-vertex graph $G$ and given word $(w_1, \ldots, w_n) \in \Sigma^n$, whether $G$ has a letter representation $\prec, \lambda: V(G) \to \Sigma, D$ for which after enumerating $V(G)$ as $v_1 \prec v_2 \prec \cdots \prec v_n$, for all $i \in [n]$ $\lambda(v_i) = w_i$. The theorem below is proved in Section~\ref{sec:word-extension-np-complete} of the Appendix. 
\begin{theorem}\label{thm:word-extension-np-complete}
\textsc{Word Extension} is \textsf{NP}-complete even when $G$ is a disjoint union of equally sized cliques.
\end{theorem}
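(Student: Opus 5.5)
We would prove membership in \textsf{NP} exactly as in Theorem~\ref{thm:np-complete}: a certificate is a bijection from $V(G)$ to the positions $1,\ldots,n$, which fixes both $\prec$ and $\lambda(v_i)=w_i$, together with a decoder $D$. It is checked with $\binom n2$ lookups. The substance is hardness. We plan to reduce from a strongly \textsf{NP}-complete number problem such as \textsc{3-Partition}, so that a unary encoding keeps the instance polynomial. This proposal is a plan, and the gadget word below has not yet been designed or verified.

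\textbf{Step 1: reformulate the problem.} When $G$ is a disjoint union of cliques of size $s$, choosing a bijection amounts to partitioning the positions $[n]$ into blocks of size $s$, where the blocks are the cliques. We restate validity through Lemma~\ref{lem:orderofdistinguishers}, taking the coloring to be the letter function induced by the word. For two vertices $u,v$ in the same clique we have $\Delta_G(u,v)=\emptyset$, so same-letter pairs inside a block impose nothing. If $u,v$ carry the same letter but lie in different cliques $K_u$ and $K_v$, then $\Delta_G(u,v)=(K_u\cup K_v)\setminus\{u,v\}$. This gives the following characterization.

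\emph{Condition} $(\ast)$: the word $w$ is realizable if and only if $[n]$ can be split into blocks of size $s$ such that, whenever positions $i<j$ carry the same letter and lie in different blocks, every other position of both blocks lies strictly between $i$ and $j$.

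By Lemma~\ref{lem:orderofdistinguishers} the decoder then exists automatically, so the problem becomes purely combinatorial. Condition $(\ast)$ forces a block that shares a letter with another block to be nested inside the span between those two occurrences. It is also very rigid: if a letter occurs in three or more different blocks, then the outermost occurrences force strong interval nesting.

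\textbf{Step 2: build the word.} Given an instance $a_1,\ldots,a_{3m}$ with target $B$, we aim for a word made of two parts. The first part consists of \emph{frame} letters, each used twice, which by $(\ast)$ carve out $m$ disjoint windows (the bins). Each bin must be completed into blocks using exactly $B$ filler positions. The second part consists of \emph{item} segments: item $a_r$ is a run of $a_r$ copies of a private letter. These runs are forced into a single block, or else they are forced to be split only in ways compatible with $(\ast)$. We will tune $s$ and add padding letters that occur once each, since such letters are unconstrained and only fill blocks. The intended effect is that a valid partition exists exactly when the items can be distributed so that every bin receives total size $B$.

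\textbf{Step 3: correctness.} The forward direction takes a $3$-partition and writes down the blocks explicitly, then checks $(\ast)$. The backward direction uses $(\ast)$ to argue that every block is confined to one window, and then reads off the bin contents.

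\textbf{Main obstacle.} The hardest part is designing the word so that $(\ast)$ really forbids blocks that straddle windows or mix items in unintended ways. Because every block has the same size $s$, the padding must be counted exactly. If \textsc{3-Partition} proves awkward, we would first fall back to a reduction from \textsc{Monotone-NAE3-SAT}. There, each pair class of Section~\ref{sec:sattocolor} would be mimicked by two occurrences of a letter, and the choice of which clique each occurrence joins would encode $\tau$.
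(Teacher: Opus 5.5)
\textbf{There is a genuine gap.} Your Condition $(\ast)$ is a correct consequence of Lemma~\ref{lem:orderofdistinguishers}, and your membership-in-\textsf{NP} argument is fine. But the hardness part, which is the whole content of the theorem, is not given: you say yourself that the word has not been designed. The design you sketch also would not work as described. Frame letters used twice cannot ``carve out'' windows. By $(\ast)$, two equal letters placed in the \emph{same} block impose no constraint at all, and nothing forces the two occurrences of a frame letter into different cliques. Moreover, blocks are arbitrary sets of positions, not intervals, so ``windows'' is the wrong picture. As it stands, Steps~2 and~3 are a promise rather than a proof.

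\textbf{The missing idea:} no frame or padding letters are needed, because the cliques themselves can be the bins. Take $m$ disjoint copies of $K_B$ (so $s=B$). Take the word $\sigma_1^{a_1}\sigma_2^{a_2}\cdots\sigma_{3m}^{a_{3m}}$ with pairwise distinct letters.
\begin{itemize}
\item Suppose two occurrences $i<j$ of $\sigma_r$ lay in different blocks. Then $(\ast)$ would place all $2B-2$ other positions of those two blocks strictly between $i$ and $j$. Only $\sigma_r$ occurs there, and there are at most $a_r-2$ such positions. This would force $a_r\ge 2B$, which is impossible.
\item Hence each run lies inside one block, and each block is a union of whole runs of total length $B$. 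The blocks therefore correspond exactly to a partition of $[3m]$ into $m$ classes each summing to $B$.
\item Conversely, a $3$-partition lets you assign whole runs to cliques. Then no letter appears in two different blocks, and $(\ast)$ holds vacuously.
\end{itemize}
This is exactly the paper's reduction. The paper phrases the backward direction via a vertex of the clique lying outside the letter class, which is the same argument. With this construction, your fallback to \textsc{Monotone-NAE3-SAT} is unnecessary.
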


\section{Conclusions}

In this paper, it is proven that the lettericity problem is \textsf{NP}-complete. Notice that this is the best that we can achieve: as it is known that the question to determine if the lettericity of a given graph is at most~$k$ is in $\textsf{FPT}$ when parameterized by~$k$, we cannot hope for an $\textsf{NP}$-hardness result for any fixed~$k$ (in contrast to the well-known \textsc{Colorability} problem) unless $\textsf{P} = \textsf{NP}$; see \cite{AleKLZ2026}. Yet, it would be interesting to better understand the structure of graphs of lettericity at most~$k$ for fixed~$k$. For instance, as argued in \cite{AleFKLVZ2022}, threshold graphs have lettericity at most~$2$.

Also, it would be interesting to solve the lettericity problem on special graph classes. It is at least implicitly asked if this is possible for classes as small as cographs in~\cite{AleLozWer2022}. By the mentioned parameterized complexity result, it also follows that graph classes with a potentially \textsf{NP}-hard lettericity problem must contain graphs of arbitrarily large lettericity. In this context, the studies of~\cite{FerVat2022} give an important characterization of these potentially interesting graph classes.
Several graph classes have been already studied concerning lettericity; we point to \cite{AleALM2023,AleLozWer2022,AleLWZ2020}.
Links to classes of permutations also offer interesting connections, see \cite{AleLWZ2020,Bra2025,BriVat2022}.
Similarly, complexity studies on graph classes for the related parameter \emph{thinness}, only recently proved to be an \textsf{NP}-hard graph parameter~\cite{Shi2025}, are also lacking on many natural graph classes.

The lower bound of Theorem~\ref{thm:eth-lowerbound} suggests a natural question. In~\cite[Theorem~16]{AleKLZ2026}, a $2^{\mathcal{O}(k^2 2^{2k})}n^3$ algorithm for \textsc{Lettericity} is given. Can this be improved by showing the existence of a $2^{\mathcal{O}(k)}n^{O(1)}$ algorithm for \textsc{Lettericity}? If so, there would also be a $2^{\mathcal{O}(n)}$ algorithm, as \textsc{Lettericity} instances with $k \ge n$ are trivially \textsc{yes}-instances. This would, up to constant factors in the exponent, match the lower bound of $2^{o(n)}$ under ETH which Theorem~\ref{thm:eth-lowerbound} gives, and would constitute getting rid of the logarithm in the naive $2^{\mathcal{O}(n \log n)}$ algorithm for \textsc{Lettericity}.

Finally, we like to point to the fact that lettericity (and hence thinness) generalize neighborhood diversity. There is another sequence of graph parameters that generalize neighborhood diversity, namely, \emph{uniformicity} and \emph{distinguishing number} (discussed in \cite{Loz2023}), whose complexity status was, to the knowledge of the authors, never discussed. 




\bibliography{ab,local}

@string{alg = {Algorithmica}}

@string{dam = "Discrete Applied Mathematics"}

@string{dm = "Discrete Mathematics"}

@string{jgth = "Journal of Graph Theory"}

@string{pjm = "Pacific Journal of Mathematics"}

@string{siamdm = {{SIAM} Journal of Discrete Mathematics}}

@string{tcs = "Theoretical Computer Science"}

@article{AleALM2023,
  author       = {B. Alecu and
                  A. Atminas and
                  V. V. Lozin and
                  D. S. Malyshev},
  title        = {Combinatorics and Algorithms for Quasi-Chain Graphs},
  journal      = alg,
  volume       = {85},
  number       = {3},
  pages        = {642--664},
  year         = {2023},
  doi          = {10.1007/S00453-022-01019-6},
}

@article{AleFKLVZ2022,
  author       = {B. Alecu and
                  R. Ferguson and
                  M. M. Kant{\'{e}} and
                  V. V. Lozin and
                  V. Vatter and
                  V. Zamaraev},
  title        = {Letter Graphs and Geometric Grid Classes of Permutations},
  journal      = siamdm,
  volume       = {36},
  number       = {4},
  pages        = {2774--2797},
  year         = {2022},
  doi          = {10.1137/21M1449646},
}

@article{AleKLZ2026,
  author       = {B. Alecu and
                  M.~M. Kant{\'{e}} and
                  V. V. Lozin and
                  V. Zamaraev},
  title        = {Lettericity of graphs: an {FPT} algorithm and a bound on the size
                  of obstructions},
  journal      = alg,
  volume       = {88},
  number       = {1},
  pages        = {2:1--2:27},
  year         = {2026},
  doi          = {10.1007/S00453-025-01341-9},
}

@article{AleLozMal2024,
author = {Alecu, B. and Lozin, V. and Malyshev, D.},
title = {Critical properties of bipartite permutation graphs},
journal = jgth,
volume = {105},
number = {1},
pages = {34--60},
doi = {https://doi.org/10.1002/jgt.23011},
year = {2024}
}

@article{AleLozWer2022,
  author       = {B. Alecu and
                  V.~V. Lozin and
                  D. de Werra},
  title        = {The micro-world of cographs},
  journal      = dam,
  volume       = {312},
  pages        = {3--14},
  year         = {2022},
  doi          = {10.1016/J.DAM.2021.11.004},
}

@article{AleLWZ2020,
  author       = {B. Alecu and
                  V.~V. Lozin and
                  D. de Werra and
                  V. Zamaraev},
  title        = {Letter graphs and geometric grid classes of permutations: Characterization
                  and recognition},
  journal      = dam,
  volume       = {283},
  pages        = {482--494},
  year         = {2020},
  doi          = {10.1016/J.DAM.2020.01.038},
}

@inproceedings{AleVilLoz2026,
  author       = {B. Alecu and
                  P.~B. Villafana and
                  V.~V. Lozin},
  editor       = {J. Goedgebeur and
                  P. Rzazewski},
  title        = {Cycles in Unions of Transitive Tournaments},
  booktitle    = {52nd International Workshop on Graph-Theoretic Concepts in Computer
                  Science, {WG}},
  series       = {LIPIcs},
  volume       = {376},
  pages        = {3:1--3:16},
  publisher    = {Schloss Dagstuhl - Leibniz-Zentrum f{\"{u}}r Informatik},
  year         = {2026},
  doi          = {10.4230/LIPICS.WG.2026.3},
}

@article{AtmLoz2024,
  author       = {A. Atminas and
                  V.~V. Lozin},
  title        = {Deciding atomicity of subword-closed languages},
  journal      = tcs,
  volume       = {1003},
  pages        = {114595:1--11},
  year         = {2024},
  doi          = {10.1016/J.TCS.2024.114595},
}

@Article{Bra2025,
title   ="Decidability in Geometric Grid Classes of Permutations",
author  ="S. Braunfeld",
pages   ="987--1000",
journal ={Proceedings of the American Mathematical Society},
year    ="2025",
volume  ="153",
number  ="3",
doi = {10.1090/proc/17083}
}

@article{BriVat2022,
  author       = {R. Brignall and
                  V. Vatter},
  title        = {Labelled well-quasi-order for permutation classes},
  journal      = {Combinatorial Theory},
  volume       = {2},
  number       = {3},
pages = {1--55},
  year         = {2022},
  doi          = {10.5070/C62359178},
}

@article{DarDoc2020,
  author    = {A. Darmann and
               J. D{\"{o}}cker},
  title     = {On a simple hard variant of Not-All-Equal 3-{SAT}},
  journal   = tcs,
  volume    = {815},
  pages     = {147--152}, year = {2020}, doi={10.1016/j.tcs.2020.02.010}
}

@techreport{FenFFKS2026,
  author       = {Z. Feng and H. Fernau and
                  P. Fleischmann and
                  P. Kindermann and
                  S.~C. Sacher},
      title={Determining Factorial Speed Fast}, 
      year={2026},
month = feb,
  institution   = {ArXiv,  Cornell University},
  number    = {2602.24064},
      yeprint={2602.24064},
      yarchivePrefix={arXiv},
      yprimaryClass={cs.DM},
      yurl={https://arxiv.org/abs/2602.24064}, 
}

@inproceedings{FenFMRS2025,
  author       = {Z. Feng and
                  H. Fernau and
                  K. Mann and
                  I. Raman and
                  S.~C. Sacher},
  editor       = {D.~R. Gaur and
                  R. Mathew},
  title        = {Generalized Lettericity of Graphs},
  booktitle    = {Algorithms and Discrete Applied Mathematics - 11th International Conference,
                  {CALDAM}},
  series       = {LNCS},
  volume       = {15536},
  pages        = {134--146},
  publisher    = {Springer},
  year         = {2025},
  doi          = {10.1007/978-3-031-83438-7_12},}

@article{Fer2020,
  author       = {R. Ferguson},
  title        = {On the lettericity of paths},
  journal      = {Australasian Journal of Combinatorics},
  volume       = {78},
  pages        = {348--351},
  year         = {2020},
  url          = {http://ajc.maths.uq.edu.au/pdf/78/ajc\_v78\_p348.pdf},
}

@article{FerVat2022,
  author       = {R. Ferguson and
                  V. Vatter},
  title        = {Letter graphs and modular decomposition},
  journal      = dam,
  volume       = {309},
  pages        = {215--220},
  year         = {2022},
  doi          = {10.1016/J.DAM.2021.11.007},
}

@inproceedings{GroMorSac2026,
  author       = {M. Grobler and N. Morawietz and S.~C. Sacher},
  editor       = {V. Brattka and H. Fernau and L. Galeotti},
  title        = {Towards Settling the Complexity of the Lettericity Problem},
  booktitle    = {Timeless Machines: Computability Across Eras; 22nd Conference on Computability in Europe, CiE},
  pages        = {304--319},
  publisher    = {Springer},
volume={16674},
  series = {LNCS},
  year         = {2026},
doi ={10.1007/978-3-032-31348-5_20}
}

@article{KnoMas2018,
title = {Computational complexity of distance edge labeling},
journal = dam,
volume = {246},
pages = {80--98},
year = {2018},
doi = {https://doi.org/10.1016/j.dam.2017.01.007},
author = {D. Knop and T. Masařík},
}

@article{Loz2023,
  author    = {V.~V. Lozin},
  title     = {Hereditary classes of graphs: {A} parametric approach},
  journal   = dam,
  volume    = {325},
  pages     = {134--151},
  year      = {2023},
doi={10.1016/j.dam.2022.10.016}
}

@article{ManVat2024,
      title={Bounds on the lettericity of graphs}, 
      author={S. Mandrick and V. Vatter},
  journal      = {The Electronic Journal of Combinatorics},
  volume       = {31},
  number       = {4},
pages={53:1--53:8},
  year         = {2024},
  doi          = {10.37236/12411},
}

@article{Pet2002,
title = {Letter graphs and well-quasi-order by induced subgraphs},
journal = dm,
volume = {244},
number = {1},
pages = {375--388},
year = {2002},
doi = {10.1016/S0012-365X(01)00094-2},
author = {M. Petkovšek},}

@article{Shi2025,
	author = {Y. Shitov},
  title     = {Graph thinness: a lower bound and complexity},
  journal   = pjm,
  volume    = {339},
  number    = {2},
doi={10.2140/pjm.2025.339.333},
  pages     = {333--343},
  year      = {2025},
}

@BOOK{GarJoh79,
	AUTHOR = "M. R. Garey and D. S. Johnson",
	TITLE = "Computers and Intractability",
	PUBLISHER = "New York: Freeman",
	YEAR = 1979,
	month = "",
	volume = "",
	edition = "",
	series = "",
	doi = "10.5555/578533"
}
\clearpage
\appendix
\section{Deferred Proofs}

\subsection{Proof of Lemma~\ref{lem:basic-gadget-arcs}}\label{sec:arc-table-proof}
For reference, the statement of Lemma~\ref{lem:basic-gadget-arcs} is reiterated below. 

Let $A = \{a^0, a^1\}$ and $B = \{b^0, b^1\}$  be pair classes. Let $\tau:\{A,B\}\to\{0,1\}$ be an orientation assignment. The arc sets of $L_\tau(E(A,B))$ and $R_\tau(E(A,B))$ are given by the following table:
\[
\begin{array}{c|c|c}
(\tau(A),\tau(B))
  & L_\tau(E(A,B))
  & R_\tau(E(A,B)) \\
\hline
(1,1) & \{A\to B\} & \{A\to B\} \\
(0,0) & \{B\to A\} & \{B\to A\} \\
(0,1) & \{A\to B,\ B\to A\} & \varnothing \\
(1,0) & \varnothing & \{A\to B,\ B\to A\}
\end{array}
\]
Moreover, for the containment gadget, the only arc of $L_\tau(C(A,B))$ is $A\to B$, and the only arc of $R_\tau(C(A,B))$ is $B\to A$.

\begin{proof}[Proof of Lemma~\ref{lem:basic-gadget-arcs}]
We have 
\(
\Delta_{E(A,B)}(a^0,a^1)=\{b^0\} \) and \(\Delta_{E(A,B)}(b^0,b^1)=\{a^1\}\) in the equality gadget.
Consequently, directly from the definitions of $L_\tau$ and~$R_\tau$,
\[
\begin{aligned}
A\to B\in L_\tau(E(A,B))&\iff \tau(B)=1,\\
B\to A\in L_\tau(E(A,B))&\iff \tau(A)=0,\\
A\to B\in R_\tau(E(A,B))&\iff \tau(A)=1,\\
B\to A\in R_\tau(E(A,B))&\iff \tau(B)=0.
\end{aligned}
\]
These four equivalences give the table for the arc sets of $L_\tau(E(A,B))$ and $R_\tau(E(A,B))$.

For the containment gadget,
\(
\Delta_{C(A,B)}(a^0,a^1)=\{b^0,b^1\} \) and \(\Delta_{C(A,B)}(b^0,b^1)=\varnothing.
\)
Thus $\ell_\tau(B) \in \Delta_{C(A,B)}(a^0,a^1)$, which implies that $A\to B$ is an arc of $L_\tau(C(A,B))$, and $r_\tau(B) \in \Delta_{C(A,B)}(a^0,a^1)$, which implies that $B\to A$ is an arc of $R_\tau(C(A,B))$. \(\Delta_{C(A,B)}(b^0,b^1)\) being \(\varnothing
\) excludes the possibility of other arcs being in $L_\tau(C(A,B))$ or $R_\tau(C(A,B))$. \end{proof}

\subsection{Full Proof of 
Lemma~\ref{lem:construction-structure}}\label{sec:lemma5full}

The statement of Lemma~\ref{lem:construction-structure} is reproduced below for reference.

For every clause $C \in \mathcal C$, define \(V_C = \bigcup_{P \in \mathcal P_C} P\). That is, $V_C$ consists of all of the vertices in the pair classes of $\mathcal P_C$. Define $\mathcal M = \bigcup_{z \in Z} \{M_z\}$ and  $V_\mathcal M = \bigcup_{z \in Z}M_z$. The following statements hold: \begin{enumerate}
\item For every $C \in \mathcal C$, $G_I[V_C] = N(\mathcal P_C)$. The restriction of $\chi_I$ to $V_C$ has color classes precisely the members of $\mathcal P_C$. Moreover, $G_I - V_\mathcal M = \bigsqcup_{C \in \mathcal C} N(\mathcal P_C)$.
\item For every $C \in \mathcal C$ and $i \in \mathbb Z_3$, $G_I[M_{z_{C,i}} \cup X_{C,i} ]= E(M_{z_{C,i}}, X_{C,i})$. Furthermore, every edge of $G_I$ which is incident to a vertex in $M_z$ has the form \(m_z^1x_{C,i}^0\) for some $C,i$ for which $z_{C,i} = z$.
\item Let $\tau : \mathcal P_I \to \{0,1\}$ be any orientation assignment, and, for each $C \in \mathcal C$, define $\tau_C = \tau|_{\mathcal P_C}$. Then $L_\tau(G_I) - \mathcal M = \bigsqcup_{C\in\mathcal C} L_{\tau_C}(N(\mathcal P_C))$ and $R_\tau(G_I) - \mathcal M = \bigsqcup_{C\in\mathcal C} R_{\tau_C}(N(\mathcal P_C))$.
\end{enumerate}

\begin{proof}[Full Proof of Lemma~\ref{lem:construction-structure}]

 We first turn attention to showing item~1. 
 \eqref{eq:construction} gives the construction $G_I = (\bigcup_{C \in \mathcal C} N(\mathcal P_C)) \cup (\bigcup_{(C,i) \in \mathcal C \times \mathbb{Z}_3}E(M_{z_{C,i}},X_{C,i}))$.
 The vertex sets of the NAE  gadgets are pairwise disjoint, and every edge of $E(M_{z_{C,i}},X_{C,i})$ has one endpoint in $M_z$. Therefore no edge in $E(M_{z_{C,i}},X_{C,i})$ either has both endpoints in $V_C$ or joins $V_C$ to $V_{C'}$ for $C, C' \in \mathcal C$. Therefore we have $G_I[V_C]=N(\mathcal P_C)$ for every $C$ and $G_I-V_{\mathcal M}
  =
  \bigsqcup_{C\in\mathcal C} N(\mathcal P_C)$. The assertion with respect to $\chi_I$ which item~1 makes is justified by definition. This proves item~1 entirely.

  For item~2, fix $C \in \mathcal C$ and $i \in \mathbb Z_3$. The gadget used to construct $G_I$ which contributes an edge between $M_{z_{C,i}}$ and $X_{C,i}$ is $E(M_{z_{C,i}},X_{C,i})$, whose unique edge is $m_{z_{C,i}}^1x_{C,i}^0$. All other equality gadgets used in \eqref{eq:construction} which involve $M_{z_{C,i}}$ connect it to a different pair class than $X_{C,i}$. Thus \(G_I[M_{z_{C,i}}\cup X_{C,i}]
  =
  E(M_{z_{C,i}},X_{C,i})\). The assertion of item~2 regarding edges of $G_I$ incident to a vertex in $M_z$ follows from the construction just detailed. This proves item~2.

  For item~3, let $C \in \mathcal C$ be arbitrary and fix distinct pair classes $P,Q \in \mathcal P_C$. By item~1, we have $G_I[V_C] = N(\mathcal P_C)$. Thus every vertex in $Q$ has the same adjacency relationship to the vertices in $P$ in $N(\mathcal P_C)$ as it has in $G_I$. Since $\tau_C$ is the restriction of $\tau$, we have $\ell_{\tau_C}(Q) = \ell_\tau(Q)$ and $r_{\tau_C}(Q) = r_\tau(Q)$. By the definitions of the left and right digraphs, $L_\tau(G_I)[\mathcal P_C] = L_{\tau_C}(N(\mathcal P_C))$ and $R_\tau(G_I)[\mathcal P_C] = R_{\tau_C}(N(\mathcal P_C))$.

  Now let $C, C' \in \mathcal C$ be distinct clauses, and let $P \in \mathcal P_C$ and $Q \in \mathcal P_{C'}$. By item~1, no edge of $G_I$ joins together a vertex of $P$ to a vertex of~$Q$; thus neither endpoint of either pair class distinguishes the two vertices of the other pair class. This implies that there is no arc in either direction between $P$ and $Q$ in any of $L_\tau(G_I)$ or $R_\tau(G_I)$. 

  After deleting nodes of the form $M_z$ from both digraphs, the remaining vertices are partitioned by the families $\mathcal P_C$, and there are no arcs between distinct such families by the preceding paragraph. This establishes item~3.
\end{proof}

\subsection{Full Proof of Theorem~\ref{thm:np-complete}}\label{sec:full-np-complete}
Theorem~\ref{thm:np-complete} states that \textsc{Lettericity} is \textsf{NP}-complete. 
\begin{proof}[Full proof of Theorem~\ref{thm:np-complete}]
Fix some method of encoding an instance $(G,k)$ of \textsc{Lettericity} as a bit string such that $|V(G)|$ is polynomially bounded by the length of the encoding, arbitrary bit strings can be recognized in polynomial time as being instances of this encoding scheme, and $V(G)$, $E(G)$, and $k$ can be accessed in polynomial time. Because the letter function in a letter representation may be assumed surjective per the discussion to that effect in the preliminaries, a letter representation $x = (\prec, \lambda: V(G) \to \Sigma, D)$ of  $(G,k)$ can be encoded in size polynomial in $|(G,k)|$ by, for $n := |V(G)|$,  using a length-$n$ ordered list to represent $\prec$, a length-$n$ ordered list whose $i$th entry consists of the letter to which the $i$th vertex is mapped under $\lambda$, and an $n \times n$ boolean matrix $M$ for which $M_{a,b} = 1$ if and only if $(a,b) \in D$.

Consider the algorithm $V((G,k),x)$, which first checks whether its inputs $(G,k)$ and $x$ are bit strings which conform to the respective encoding schemes fixed above. If this condition is not satisfied, $V$ outputs $0$. If it is satisfied, $V$ then checks whether $x = (\prec, \lambda: V \to \Sigma, D)$ satisfies $|\Sigma| \le k$. Finally, $V$ checks whether $x$ is a valid letter representation of $(G,k)$ by verifying that for all $u \prec v$, \(uv \in E(G) \iff (\lambda(u), \lambda(v)) \in D\). If any of these conditions aren't satisfied, $V$ halts and outputs $0$; if the last check it performs comes back in the affirmative, $V$ outputs $1$. 

$V$ runs in polynomial time; in particular, checking  that for all $u \prec v$, \(uv \in E(G) \iff (\lambda(u), \lambda(v)) \in D\), involves iterating through $n \choose 2$ items, for each of which items a polynomial amount of time is spent to access $E(G)$ and $D$. 

If $(G,k)$ is a \textsc{yes}-instance of \textsc{Lettericity}, there exists a letter representation of it whose alphabet  is of size at most $k$, and thus there exists a bit string conforming to the fixed encoding convention for certificate instances which induces $V$ to be $1$. Conversely, if there exists some $x$ for which $V((G,k), x) = 1$, then all of the checks which $V$ performs pass when it takes $(G,k)$ and $x$ as input, and thus $x$ is a valid letter  representation of $(G,k)$; therefore $(G,k)$ is a \textsc{yes}-instance. 
Thus $V$ is an \textsf{NP}-verifier for \textsc{Lettericity}.

Combining membership in \textsf{NP} with \textsf{NP}-hardness gives the desired statement. \end{proof}

\subsection{Full Proof of Theorem~\ref{thm:eth-lowerbound}}\label{sec:full-cor-eth-lowerbound}

We here restate Theorem~\ref{thm:eth-lowerbound} in terms of $H$ and $K$ instead of $G$ and $k$ respectively to not conflict with the variables $G$ and $k$ which are used internally within the reduction $R$ and which are referenced in the below proof. 

Unless the Exponential Time Hypothesis is incorrect, there cannot exist a deterministic algorithm which, given a graph $H$ on $h$ vertices and a positive integer~$K$, decides whether $\ell(H) \le K$  in time $2^{o(h)}$, even when $h = 6K$.
\begin{proof}[Full proof of Theorem~\ref{thm:eth-lowerbound}]

Assume the ETH. Under this assumption, there cannot exist a deterministic algorithm which decides \textsc{Monotone-NAE3-SAT} in $2^{o(m)}$ time, where $m$ is the number of clauses in its input instances of \textsc{Monotone-NAE3-SAT}~\cite[Lemma~5.4]{KnoMas2018}.

We now show that the reduction $R$ from  \textsc{Monotone-NAE3-SAT} to \textsc{Lettericity} formed by composing the reductions in Sections~\ref{sec:sattocolor} and~\ref{sec:colortoletter} in the appropriate way can be trivially modified in a way that preserves its correctness and results in it, upon taking as input an instance of \textsc{Monotone-NAE3-SAT} with $n'$ variables and $m$ clauses, outputting a graph with $6K$ vertices for $K =  \Theta(m)$. 

The modification alluded to is but a post-processing step. Let $I$ be an arbitrary instance of \textsc{Monotone-NAE3-SAT} with $n'$ variables and $m$ clauses, and let $(G, \chi : V(G) \to [k])$ be the graph produced by the reduction in Section~\ref{sec:sattocolor} upon taking $I$ as input. Let $(H,K)$ be the output of $R$ upon taking $I$ as input. We may assume that $m \ge 1$, since $I$ is trivially satisfiable if it has no clauses and $R$ can map it to a \textsc{yes}-instance in this case.

Consider the postprocessing step which constructs a new graph $H'$ from $H$ by, for each $v \in V(G)$, arbitrarily deleting vertices in $X_v$ until exactly three vertices remain in it and, for each $i \in [k]$ and $b \in \{0,1\}$, arbitrarily deleting vertices in $M_i^b$ until exactly three vertices remain in it. This postprocessing step can be accomplished, as there are at least $3$ vertices in $X_v$ and $M_i^b$: $m \ge 1$ implies that $k \ge 1$, as the reduction in Section~\ref{sec:sattocolor} sets $k = n' + 6m$; $k \ge 1$ implies $N \ge 3$ as the reduction in Section~\ref{sec:colortoletter} sets $N = 2k + 1$.

Now for any graph $\Gamma$, two distinct vertices $u, v \in V(\Gamma)$ satisfying $\Delta_\Gamma(u,v) = \emptyset$ are called twins. A twin class is a maximal subset of $V(\Gamma)$ for which any pair of distinct vertices in it are twins  (i.e., a set $S \subseteq V(\Gamma)$ of pairwise twins in $V(\Gamma)$ for which there is no $S' \subseteq V(\Gamma)$ such that $S \subsetneq S'$ and the vertices in $S'$ are pairwise twins). For each of the $X_v$ or $M_i^b$, by construction any pair of vertices in it are twins. Thus whenever any of the $X_v$ or $M_i^b$ contain at least 4 vertices, they are part of a twin class of size at least $4$. Therefore, iteratively applying~\cite[Lemma~17]{AleKLZ2026} implies that $\ell(H') = \ell(H)$. As such, adding the discussed post-processing step to $R$ results in a valid Karp-reduction from \textsc{Monotone-NAE3-SAT} to \textsc{Lettericity}, as deleting the vertices of $H$ is of course polynomial as a polynomial number of vertices are deleted. 

The postprocessing step does not at all modify $K$. Thus, due to the fact that the reduction in Section~\ref{sec:colortoletter} sets $K = 2k$ and the reduction in Section~\ref{sec:sattocolor} sets $k = n' + 6m$, we have $K = 2k = 2(n'+6m) = \Theta(n'+m)$. Because we may assume $n' \le 3m$ as unused variables in $I$ can be deleted, we have $K = \Theta(m)$.

We also have, by construction, $|V(H')| = 3(|V(G)| + 2k) = 3(K + K) = 6K$.

Now if Theorem~\ref{thm:eth-lowerbound} was false, $\ell(H') \le K$ could be decided in time $2^{o(|V(H')|)}$, meaning that $I$ could be decided in time $2^{o(m)}$. Because $I$ was arbitrary and the algorithm to decide \textsc{Lettericity} just alluded to does not depend on $I$, this gives a   $2^{o(m)}$ deterministic algorithm to decide \textsc{Monotone-NAE3-SAT}, contradicting the bound on that problem cited at the beginning of this proof.
\end{proof}

\subsection{Proof of Theorem~\ref{thm:word-extension-np-complete}}\label{sec:word-extension-np-complete}

\begin{proof}[Proof of Theorem~\ref{thm:word-extension-np-complete}]
For membership in \textsf{NP}, let $(G',w)$ be an instance of \textsc{Word Extension}, where $n = |V(G')|$ and $w = (w_1, \ldots, w_n)$. Let $\Sigma$ be exactly the elements of $w$.
A certificate is an enumeration $v_1, \ldots, v_n$ of $V(G')$ and some $D$ for $D \subseteq \Sigma^2$; this certificate is polynomial-size as $|\Sigma| \le n$. It can be verified that  $(G',w)$ is a \textsc{yes}-instance of \textsc{Word Extension} by, after verifying that $v_1, \ldots, v_n$ is an enumeration of $V(G')$ and that $D$ is indeed a subset of $\Sigma^2$, defining a total order $\prec$ of $V(G')$ such that $v_1 \prec v_2 \prec \cdots \prec v_n$ and a function $\lambda$ such that for all $i \in [n]$ $\lambda(v_i) = w_i$ and then verifying, using an \textsf{NP}-verifier for \textsc{Lettericity}, that $(\prec, \lambda, D)$ is a letter representation of $G'$.

For \textsf{NP}-hardness, we reduce from \textsc{3-Partition} with the input integers encoded in unary, which is \textsf{NP}-complete
~\cite[Problem SP15]{GarJoh79}. This problem takes as input unary-encoded positive integers $a_1,\ldots,a_{3m}$, and $B$ such that $\sum_{i=1}^{3m}a_i=mB$ and, for $i \in [3m]$, $\frac B4<a_i<\frac B2$, and asks whether $[3m]$ can be partitioned into $m$ disjoint sets $J_1, \ldots, J_m$ such that for each $t \in [m]$, $\sum_{i \in J_t} a_i = B$.

Let $(a_1,\ldots,a_{3m}, B)$, encoded in unary, be an arbitrary \textsc{3-Partition} instance. Set $H = \sqcup_{i = 1}^m K_B$. Define the alphabet $\Sigma=\{\sigma_1,\ldots,\sigma_{3m}\}$, where the $\sigma_i$ are pairwise distinct letters. Now define $W=\sigma_1^{a_1}\Vert\cdots\Vert\sigma_{3m}^{a_{3m}}$, where  $\sigma_i^{a_i}$ is the length-$a_i$ tuple for which each entry is $\sigma_i$, and $\Vert$ is tuple concatenation, in that, for example, $(a,b,c) \Vert (d,e,f) = (a,b,c,d,e,f)$. The reduction we give outputs $(H,W)$. The entire construction is polynomial-time given the unary encoding ($H$ has $mB$ vertices and $m \binom B2$ edges).

For correctness, fix an enumeration  $C_1, \ldots, C_m$ of the vertex sets of the connected components of $H$, such that $H[C_t]$ is a copy of $K_B$. Assume first that $(a_1,\ldots,a_{3m}, B)$ is a \textsc{yes}-instance of \textsc{3-Partition}. Then there exist disjoint sets $J_1,\ldots,J_m$ which partition $[3m]$ and for which, for each $t \in [m]$, $\sum_{i \in J_t} a_i = B$. For each $t \in [m]$ and each $i \in J_t$, choose a subset $U_i \subseteq C_t$ such that the sets $U_i: i \in J_t$ partition $C_t$ and, for every $i \in J_t$, $|U_i| = a_i$ (it is possible to make these choices as $\sum_{i \in J_t}a_i = B = |C_t|$). Thus $U_1, \ldots, U_{3m}$ partition $V(H)$. Let $\prec$ order the vertices of $H$ such that for $U_1, \ldots, U_{3m}$, every vertex in a set is ordered before every vertex in a set appearing later (how $\prec$ orders vertices in the same $U_i$ is not relevant). Define $\lambda$ as mapping each vertex in $U_i$ to $\sigma_i$. Define $D=\bigcup_{t=1}^{m}
      \{(\sigma_i,\sigma_j):i,j\in J_t\}$. Vertices $u,v \in V(H)$ with $u \prec v$ satisfy $(\lambda(u), \lambda(v)) \in D$ exactly when they belong to the same clique $K_B$ in $H$, and thus $(\prec, \lambda, D)$ is a letter representation of $H$. 

      Because for every $i, j \in [3m]$ for which $i< j$, every vertex in $U_i$ is ordered by $\prec$ to be before every vertex of $U_j$; and because moreover, we have $|U_i| = a_i$ and $\lambda(v) = \sigma_i$ for every $v \in U_i$; it follows that  if $V(H)$ is enumerated as $v_1  \prec v_2 \prec \cdots \prec v_{mB}$, and if  the elements of $W$ are indexed as $(w_1, \ldots, w_{mB})$, then for each $r \in [mB]$, $\lambda(v_r) = w_r$. Hence, $(H,W)$ is a \textsc{yes}-instance of \textsc{Word Extension}.

    Conversely, assume that $(\prec, \lambda, D)$ is a letter representation of $H$ for which after enumerating $V(H)$ as $v_1 \prec v_2  \prec\cdots \prec v_{mB}$, for all $i \in [mB]$ $\lambda(v_i) = w_i$. For each $i \in [3m]$, define $S_i = \{v \in V(H):\lambda(v) = \sigma_i\}$; due to how $(\prec, \lambda, D)$ certifies that $(H,W)$ is a \textsc{yes}-instance of \textsc{Word Extension}, we have that $|S_i| = a_i$ and, for $u,v \in S_i$ where $u$ and $v$ are distinct, $I_{\prec}(u,v) \subseteq S_i$. Each $S_i$ is completely contained in a single connected component of $H$: if not, assume toward a contradiction that there exist distinct $u,v\in S_i$ occurring in different connected components of $H$, with $u \in C_t$. As $|S_i| = a_i < B = |C_t|$, there exists a vertex $x \in C_t \setminus S_i$. Then $x$ is adjacent to $u$ but not to $v$. Because $\lambda(u) = \lambda(v)$, Lemma~\ref{lem:orderofdistinguishers} implies that $x \in I_\prec(u,v)$; but this contradicts $ I_{\prec}(u,v) \subseteq S_i$.

    For each $t \in [m]$, define $J'_t = \{i \in [3m]: S_i \subseteq C_t\}$. The $S_i$ partition $V(H)$ and each lies in exactly one of its connected components; thus $J'_1, \ldots, J'_m$ partition $[3m]$, and, for each $t \in [m]$, $\sum_{i \in J'_t} a_i = \sum_{i \in J'_t} |S_i| = |C_t| = B$. Therefore $(a_1,\ldots,a_{3m}, B)$ is a \textsc{yes}-instance of \textsc{3-Partition}. \end{proof}

\section{Supplementary Figures}\label{sec:extra-figures}

\begin{figure}[ht]
  \centering
  \begin{tikzpicture}[
      vertex/.style={
        circle,
        draw,
        minimum size=7mm,
        inner sep=0pt,
        font=\small
      },
      pvertex/.style={vertex,fill=blue!15},
      qvertex/.style={vertex,fill=orange!20},
      rvertex/.style={vertex,fill=green!15},
      pairclass/.style={
        draw,
        dashed,
        rounded corners,
        inner sep=5pt
      },
      graphedge/.style={
        line width=0.9pt
      },
      dagnode/.style={
        circle,
        draw,
        minimum size=8mm,
        inner sep=0pt,
        font=\small
      },
      dagedge/.style={
        -{Latex[length=2.2mm]},
        line width=0.9pt
      }
    ]


    \node[font=\bfseries] at (2,2.15)
      {(a) The graph and coloring $(G,\chi)$};

    \node[pvertex] (p0) at (0,0.75) {$p^0$};
    \node[pvertex] (p1) at (0,-0.75) {$p^1$};

    \node[qvertex] (q0) at (2,0.75) {$q^0$};
    \node[qvertex] (q1) at (2,-0.75) {$q^1$};

    \node[rvertex] (r0) at (4,0.75) {$r^0$};
    \node[rvertex] (r1) at (4,-0.75) {$r^1$};

    \draw[graphedge] (p1) -- (q1);
    \draw[graphedge] (q0) -- (r0);

    \node[
      pairclass,
      fit=(p0)(p1),
      label=above:$P$
    ] {};

    \node[
      pairclass,
      fit=(q0)(q1),
      label=above:$Q$
    ] {};

    \node[
      pairclass,
      fit=(r0)(r1),
      label=above:$R$
    ] {};

    \node[font=\small] at (2,-1.65)
      {$E(G)=\{p^1q^1,\;q^0r^0\}$};

    \node[font=\small] at (2,-2.05)
      {$\chi^{-1}(1)=P,\quad
        \chi^{-1}(2)=Q,\quad
        \chi^{-1}(3)=R$};


    \node[font=\bfseries] at (8.25,2.15)
      {(b) An orientation assignment $\tau$};

    \node[font=\small] at (8.25,1.65)
      {$\tau(P)=1,\qquad \tau(Q)=0,\qquad \tau(R)=1$};

    \node[font=\small] at (6.25,0.95) {$P:$};
    \node[font=\small] at (7.05,0.95) {$p^0$};
    \node[font=\small] at (7.80,0.95) {$\prec_{\tau}$};
    \node[font=\small] at (8.55,0.95) {$p^1$};

    \node[font=\small] at (6.25,0.25) {$Q:$};
    \node[font=\small] at (7.05,0.25) {$q^1$};
    \node[font=\small] at (7.80,0.25) {$\prec_{\tau}$};
    \node[font=\small] at (8.55,0.25) {$q^0$};

    \node[font=\small] at (6.25,-0.45) {$R:$};
    \node[font=\small] at (7.05,-0.45) {$r^0$};
    \node[font=\small] at (7.80,-0.45) {$\prec_{\tau}$};
    \node[font=\small] at (8.55,-0.45) {$r^1$};

    \node[anchor=west,font=\scriptsize] at (9.05,0.95)
      {$\ell_\tau(P)=p^0$};

    \node[anchor=west,font=\scriptsize] at (9.05,0.25)
      {$\ell_\tau(Q)=q^1$};

    \node[anchor=west,font=\scriptsize] at (9.05,-0.45)
      {$\ell_\tau(R)=r^0$};


    \node[font=\bfseries] at (8.25,-1.25)
      {(c) The left digraph $L_\tau(G)$};

    \node[dagnode] (LP) at (5.75,-2.2) {$P$};
    \node[dagnode] (LQ) at (8.25,-2.2) {$Q$};
    \node[dagnode] (LR) at (10.75,-2.2) {$R$};

    \draw[dagedge]
      (LP) --
      node[above,font=\scriptsize]
      {}
      (LQ);

    \draw[dagedge]
      (LQ) --
      node[above,font=\scriptsize]
      {}
      (LR);

  \end{tikzpicture}

  \caption{An example of $L_\tau(G)$ as defined in Section~\ref{sec:support1reduction} on an example $G, \chi$, and $\tau$ is shown in panel~(c). In panel~(b), the notation $u\prec_\tau v$ means that the
    orientation assignment $\tau$ prescribes that a total order on $V(G)$ should involve $u$ preceding $v$. The left digraph $L_\tau(G)$, as well as the right digraph $R_\tau(G)$, has its vertices be exactly the pair classes of~$G$, and thus these digraphs depend on~$\chi$ to provide the pair classes of $G$. In $L_\tau(G)$, the connectivity depends on~$\tau$ in that $\tau$ determines what the left endpoints $\ell_\tau$ of the pair classes should be, and whether $P \to Q$ is in $L_\tau(G)$ depends on whether the left endpoint of $Q$ distinguishes the vertices in $P$. $R_\tau(G)$ is defined similarly but with respect to right endpoints instead. In the specific example shown above, $q^1=\ell_\tau(Q)$ distinguishes $p^0$ and~$p^1$, since
    $p^1q^1\in E(G)$ and $p^0q^1\notin E(G)$; hence
    $P\to Q$ is an arc of $L_\tau(G)$. Similarly,
    $r^0=\ell_\tau(R)$ distinguishes $q^0$ and $q^1$, yielding
    $Q\to R$. There are no other arcs in $L_\tau(G)$.
  }
  \label{fig:leftdigraph}
\end{figure}

\begin{figure}[t]
\centering
\begin{tikzpicture}[
  vertex/.style={circle,draw,inner sep=1.7pt,minimum size=7mm},
  pair/.style={draw,dashed,rounded corners,inner sep=5pt},
  edge/.style={line width=0.9pt}, scale = 1, transform shape
]
  \node[vertex] (a0) at (0,1) {$a^0$};
  \node[vertex] (a1) at (0,-1) {$a^1$};
  \node[vertex] (b0) at (2,1) {$b^0$};
  \node[vertex] (b1) at (2,-1) {$b^1$};
  \draw[edge] (a1)--(b0);
  \node[pair,fit=(a0)(a1),label=above:$A$] {};
  \node[pair,fit=(b0)(b1),label=above:$B$] {};
  \node at (1,-1.90) {equality gadget $E(A,B)$};

  \node[vertex] (c0) at (5,1) {$a^0$};
  \node[vertex] (c1) at (5,-1) {$a^1$};
  \node[vertex] (d0) at (7,1) {$b^0$};
  \node[vertex] (d1) at (7,-1) {$b^1$};
  \draw[edge] (c1)--(d0);
  \draw[edge] (c1)--(d1);
  \node[pair,fit=(c0)(c1),label=above:$A$] {};
  \node[pair,fit=(d0)(d1),label=above:$B$] {};
  \node at (6,-1.90) {containment gadget $C(A,B)$};
\end{tikzpicture}
\caption{An illustration of the equality and containment gadgets defined in Sections~\ref{sec:eqgadget} and~\ref{sec:cntngadget}, respectively.  The dashed boxes respectively denote pair classes $A$ and $B$.}
\label{fig:wires}
\end{figure}

\end{document}